\newtheorem{theorem}{Theorem}
\newtheorem{proposition}{Proposition}
\newtheorem{lemma}{Lemma}
\newcommand{\EE}{\mathbf{E}}
\begin{document}
\title{Zero-Delay Joint Source-Channel Coding for a Multivariate Gaussian on a Gaussian MAC}

\author{Pål~Anders~Floor, 
        Anna~N.~Kim,
        Tor~A.~Ramstad,
        Ilangko~Balasingham,
        Niklas~Wernersson, Mikael~Skoglund,
\thanks{P. A. Floor  and I. Balasingham are with the Interventional Center,
Oslo University Hospital and Institute of Clinical Medicine, University of Oslo,
  Oslo, Norway (e-mail: andflo@rr-research.no). A. N. Kim is with SINTEF ICT, Oslo, Norway (e-mail: annak@ieee.org).  P. A. Floor, T. A.
Ramstad  and I. Balasingham are with the Department of Electronics and
Telecommunication, NTNU, Trondheim,
Norway.}
\thanks{Niklas Wernersson is with Ericsson Research, Stockholm, Sweden.
Mikael Skoglund is with the School of Electrical Engineering and the ACCESS Linnaeus Center,
at the Royal Institute of Technology (KTH), Stockholm, Sweden}
}
\maketitle

\begin{abstract}
In this paper, communication of a Multivariate Gaussian over a Gaussian Multiple
Access Channel is studied. Distributed zero-delay joint source-channel coding (JSCC) solutions to the problem are given. Both nonlinear and linear approaches are discussed. The performance upper bound (signal-to-distortion ratio) for arbitrary code length is also derived and Zero-delay cooperative JSCC is briefly addressed in order to provide an
approximate bound on the performance of zero-delay schemes.
  The main contribution is a nonlinear
hybrid discrete-analog JSSC scheme based on distributed quantization and
a linear continuous mapping named Distributed Quantizer Linear
Coder (DQLC). The DQLC has promising performance which improves
with increasing correlation, and is robust against variations in
noise level.  The DQLC exhibits a constant gap to the performance upper bound as the signal-to-noise ratio (SNR) becomes large for any number of sources and values of correlation. Therefore it outperforms a linear solution (uncoded transmission) in any case when the SNR gets sufficiently large.
\end{abstract}
\begin{keywords}
Zero-delay joint source-channel coding, multivariate Gaussian, Gaussian multiple access channel
\end{keywords}
\IEEEpeerreviewmaketitle
\section{Introduction}\label{sec:intro}
In this paper we investigate joint source-channel coding (JSCC) for a multipoint-to-point problem, where
multiple memoryless and inter-correlated Gaussian sources are
transmitted distributedly over a memoryless Gaussian multiple access channel (GMAC) without multiplexing.
There are mainly two cases to consider for this network:
1) Recovery of the common information shared by all sources. 2) Recovery
of each individual source. In Case 1), when the transmit power of all sources are equal, the distortion lower bound can be achieved by a simple zero-delay linear mapping, often referred to as \emph{uncoded transmission}~\cite{Gastpar08}.


In case 2) both common information as well as the individual variations of each source are reconstructed at the receiver.
The bivariate case (two sources) was treated in~\cite{Lapidoth10} for arbitrary codeword length. The authors proposed a nonlinear hybrid JSCC scheme that superimposes a rate optimal (infinite dimensional) vector quantizer with uncoded transmission. This hybrid scheme was shown to achieve the distortion lower bound at high and low channel signal-to-noise ratio (SNR), while a small gap remains for other SNR. Contrary to case 1), optimality of uncoded transmission is restricted (to low SNR in general), and infinite complexity and delay JSCC are required to get close to the distortion lower bound in general. The bivariate case was also treated in~\cite{Floor_Kim_Wernersson11_TCOM}, where a zero-delay (single letter) constraint was imposed in order to provide simple low complexity and possibly implementable solutions to the problem. Two zero-delay nonlinear schemes were proposed, one discrete scheme and one hybrid discrete-analog scheme. Both schemes perform well and can improve significantly on uncoded transmission when the channel signal-to-noise ratio (SNR) is sufficiently large. Nevertheless, there remains a constant gap to the distortion lower bound for all SNR due to the zero-delay constraint.

The main contribution of this paper is to provide a nonlinear zero-delay JSCC solution for the multivariate version of case 2), since there to our knowledge exists few, if any, results on this in the literature. We generalize the hybrid discrete-analog scheme from~\cite{Floor_Kim_Wernersson11_TCOM} (named \emph{SQLC})
to the multivariate case. The proposed scheme is named \emph{DQLC} since it consists of distributed quantizers and a linear continuous mapping. The main advantage of DQLC is that
it provides a simple JSCC solution to the problem that can significantly outperform uncoded transmission
under sufficiently large channel SNR.

To assess the performance of DQLC, the distortion lower bound for the problem under consideration is derived. Since this lower bound assumes arbitrary codelength, one must expect a significant backoff from it when assessing performance of a zero-delay JSCC scheme. To provide indications on the bound for zero-delay distributed schemes, and thereby a better measure on how well the DQLC perform, known zero-delay schemes with cooperative encoders\footnote{By cooperation we mean that all source symbols are available at all encoders without any additional use of resources.} are addressed.  It is also shown that DQLC exhibits a constant gap to the distortion lower bound (or performance upper bound, the bound on signal-to-distortion ratio) as SNR $\rightarrow\infty$. This gap increases somewhat with the number of of sources, but remains bounded as the number of sources becomes large. Since uncoded transmission has a \emph{leveling off} effect at high SNR, DQLC will therefore outperform uncoded transmission for any number of sources when the SNR is sufficiently large.

The paper is organized as follows: In
Section~\ref{sec:probst_ub}, a problem formulation is given and the distortion lower bound is derived. Zero-delay cooperative encoding and uncoded transmission are also introduced. In Section~\ref{sec:DQLC} the DQLC is introduced and its distortion is derived mathematically. It is further shown that DQLC exhibits a constant gap to distortion lower bound as SNR$\rightarrow\infty$.
In Section~\ref{sec:Ex_SNQLC_M3} we concentrate on the 3 source case and optimize the DQLC for general SNR. Its performance is compared to the derived bounds and uncoded transmission. A summary and brief discussion are given in Section~\ref{sec:summary}.

Note that some of the results in this paper have previously been published in~\cite{Floor_kim_ramstad_ISABEL2011}. 

\section{Problem statement and bounds}\label{sec:probst_ub}
Fig.~\ref{fig:sgsn}
depicts the communication system under consideration.
\begin{figure}[h]
\centering
  \includegraphics[width=1\columnwidth]{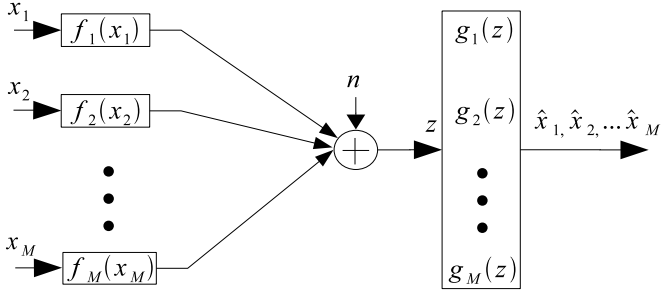}
  \caption{Network under consideration. A multivariate Gaussian is communicated on a Gaussian MAC with separate (distributed) encoders. 
  }\label{fig:sgsn}
\end{figure}

\subsection{Problem statement}\label{ssec:probst}
The sources are memoryless discrete time, continuous
amplitude, zero mean Gaussian random variables
$x_1,...,x_M$, where the model $x_m=s+w_m,
m\in\{1,...,M\}$ is considered. That is, the sources share a common information
$s\sim\mathcal{N}(0,\sigma_s^2)$, while each source also has an individual
component $w_m\sim\mathcal{N}(0,\sigma_{w_m}^2)$. A simplified scenario $\sigma_{w_1}=\sigma_{w_2}=\cdots = \sigma_{w_M}$ is assumed here, which
implies that $\sigma_{x_1}=\sigma_{x_2}=\cdots =
\sigma_{x_M}=\sigma_x$, making it easier to derive theoretical expressions that can be analyzed further. The correlation between any two sources is
then $\rho_{ij}=\EE\{x_i x_j\}/\sigma_x^2=\rho_x, \forall i,j$, resulting in a simple
covariance matrix $K_\mathbf{x}=\EE\{\mathbf{x}\mathbf{x}^T\}$ with
$\sigma_x^2$ on the diagonal and $\sigma_x^2\rho_x$ in every
off-diagonal element, and eigenvalues $\lambda_1= \sigma_x^2 ((M-1)\rho_x+1)$ and $\lambda_i= \sigma_x^2 (1-\rho_x)$, $i=2,\cdots,M$. Note that the schemes presented in this paper can be applied for any case of unequal correlations $\rho_{ij}$. The mathematical analysis becomes more complicated, however.

Each encoder consists of a memoryless
encoding function $y_m = f_m(x_i), m\in\{1,...,M\},$ and its output is transmitted
over GMAC with additive noise $n\sim\mathcal{N}(0,\sigma_{n}^2)$.
The received signal is
\begin{equation}\label{e:rec_sign}
z=\sum_{m=1}^M f_m(x_m)+n \in \mathbb{R}.
\end{equation}
At the receiver the functions $g_m(z), m\in\{1,...,M\}$,
produce the estimates $\hat{x}_1,...,\hat{x}_M$ from the channel
output $z$. We define the end-to-end distortion $D$ as the
mean-squared-error (MSE) averaged over \textit{all} source symbols
\begin{equation}\label{e:tot_dist}
D = \frac{1}{M}\left(\EE\{|x_1-\hat{x}_1|^2\}+\cdots+\EE\{|x_M-\hat{x}_M|^2\}\right).
\end{equation}
The average transmit power for node $m$ is defined as $P_m=\EE\{f_m(x_m)^2\}$. 
We further assume ideal Nyquist sampling and an ideal Nyquist
channel, where the sampling rate of each source is the same as the
signalling rate of the channel. We also assume ideal synchronization
and timing between all nodes. Our design objective is to find the
$f_m$ and $g_m$ that minimizes $D$. The DQLC encoders are asymmetric, where $P_1\geq P_2\geq\cdots \geq P_M$. Therefore, an average transmit
power constraint $P=(P_1+P_2+\cdots P_M)/M$ is considered.

\subsection{Distortion lower bound}\label{ssec:Dist_bound}
When no collaboration is allowed among the encoders, the achievable distortion bound is unknown. One can, however, derive a lower bound
by considering the ideal scenario with full collaboration among
all encoders at no additional cost. This scenario is then a point-to-point communication problem, where the distortion lower bound can be determined by equating the rate-distortion function for an
$M$ dimensional Gaussian source to the rate of the GMAC. The following proposition quantifies this bound.
\begin{proposition}
The distortion lower bound for the network in Fig.~\ref{fig:sgsn}, is in the \emph{symmetric case} $D_1=D_2=\cdots =D_M=D$ with transmit power $P_1=P_2=\cdots=P_M=P$ and correlation
$\rho_{ij}=\rho_x$, $\forall i,j$, given by
\begin{align}\label{e:Dist_bound_equalP}
{D}\geq
\begin{cases}
\sigma_x^2\left(1-\frac{P(1+(M-1)\rho_x)^2}{P(M^2\rho_x+M(1-\rho_x))+\sigma_n^2}\right)& \frac{P}{\sigma_n^2} \in\big(0,\frac{\rho_x}{1-\rho_x^2}\big],\vspace{0.3cm}\\
\sigma_x^2 \sqrt[M]{\frac{\big(1+(M-1)\rho_x\big)(1-\rho_x)^{M-1}\sigma_n^2}{P\big(M+M(M-1)\rho_x\big)+\sigma_n^2}}, & \frac{P}{\sigma_n^2}> \frac{\rho_x}{1-\rho_x^2}.
\end{cases}
\end{align}
\end{proposition}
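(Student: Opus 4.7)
The plan is to combine a cut-set / source--channel separation argument with the explicit rate--distortion function of the $M$-variate Gaussian and then solve for $D$ in two SNR regimes. First, rate--distortion theory together with data processing yields $R(D)\le I(x_1,\ldots,x_M;z)$, where $R(D)$ is the rate--distortion function of $\mathbf{x}$ under the symmetric per-component MSE constraint $D$. Since the Gaussian maximises differential entropy at a given variance,
\begin{equation*}
I(\mathbf{x};z)\le\tfrac12\log\!\Big(1+\mathrm{Var}\!\big(\textstyle\sum_{m}f_m(x_m)\big)\!/\sigma_n^2\Big).
\end{equation*}
To control the sum variance I would invoke the Lancaster/Gebelein theorem: the maximal correlation of two jointly Gaussian variables equals their linear correlation, so $|\mathrm{Cov}(f_i(x_i),f_j(x_j))|\le\rho_x\sqrt{P_iP_j}\le\rho_x P$. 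Summing the $M$ on-diagonal and $M(M-1)$ off-diagonal terms yields $\mathrm{Var}(\sum_m f_m)\le MP(1+(M-1)\rho_x)$, so that $I(\mathbf{x};z)\le C:=\tfrac12\log\bigl(1+MP(1+(M-1)\rho_x)/\sigma_n^2\bigr)$.

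Next, I would compute $R(D)$ by reverse water-filling on the eigenvalues of $K_\mathbf{x}$, namely $\lambda_1=\sigma_x^2(1+(M-1)\rho_x)$ with multiplicity one and $\lambda_2=\sigma_x^2(1-\rho_x)$ with multiplicity $M-1$. The constraint that the per-component distortions are all equal to $D$ fixes the sum of the eigenmode distortions at $MD$. Two regimes appear. When the water level $\theta\le\lambda_2$, every eigenmode is quantized and $R(D)=\tfrac12\log(\lambda_1\lambda_2^{M-1}/D^M)$; when $\theta>\lambda_2$, only the $\lambda_1$-mode is coded and $R(D)=\tfrac12\log\!\bigl(\lambda_1/(MD-(M-1)\lambda_2)\bigr)$. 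Setting $R(D)=C$ in the small-$D$ regime gives $D^M=\lambda_1\lambda_2^{M-1}\sigma_n^2/\bigl(\sigma_n^2+MP(1+(M-1)\rho_x)\bigr)$, which is precisely the high-SNR branch of~(\ref{e:Dist_bound_equalP}); setting $R(D)=C$ in the large-$D$ regime and simplifying with the identity $(M-1)(1-\rho_x)+(1+(M-1)\rho_x)=M$ produces the low-SNR branch. The cross-over at $D=\lambda_2$ translates into the SNR threshold separating the two cases.

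The main obstacle is the maximal-correlation step: for arbitrary, possibly nonlinear encoders the bound $|\mathrm{Cov}(f_i(x_i),f_j(x_j))|\le\rho_x\sqrt{P_iP_j}$ really does require Gaussianity of the $x_m$ and is what yields the effective received-signal power $MP(1+(M-1)\rho_x)$ rather than the looser $M^2P$ one would obtain from Cauchy--Schwarz alone. Once this bound is in place, the remainder of the argument is routine substitution of $R(D)=C$ into the reverse water-filling formula and algebraic simplification of the two branches.
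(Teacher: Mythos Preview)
Your proposal is correct and follows essentially the same route as the paper: bound the mutual information by $C=\tfrac12\log\bigl(1+MP(1+(M-1)\rho_x)/\sigma_n^2\bigr)$ via a maximal-correlation argument on the encoder outputs, compute $R(D)$ by reverse water-filling on the two distinct eigenvalues of $K_{\mathbf{x}}$, and equate $R(D)=C$ in the two water-level regimes. The only cosmetic difference is that the paper phrases the cross-correlation step for block codes of length~$n$ (citing \cite[Lemma~C.2]{Lapidoth10} and then Cauchy--Schwarz over time), whereas you invoke the Gebelein/Lancaster maximal-correlation theorem directly in single-letter form---these are the same inequality, and your argument extends to blocks by the same time-averaging step the paper uses.
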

\begin{proof}
Let $R^*$, $D^*$ and $P^*$  denote optimal rate, distortion and power respectively.
Assuming full collaboration, the $M$ sources can be
considered as a Gaussian vector source of dimension $M$. From~\cite{cover06}
\begin{equation}\label{eqn_d_lowerbound}
D^*(\theta,M) = \frac{1}{M}\sum_{i=1}^M \min[\theta,\lambda_i],
\end{equation}
\begin{equation}\label{e:_vecGaussRD}
R^*(\theta,M) = \frac{1}{M}\sum_{i=1}^M
\max\left[0,\frac{1}{2}\log_2\frac{\lambda_i}{\theta}\right],
\end{equation}
where $\lambda_i$ is the $i$-th eigenvalue of the covariance matrix $K_{\mathbf{x}}$.

Assuming that all $M$ encoders transmit at the same power and that the correlation between the encoder outputs, $y_m$, are equal to
$\rho_x\geq0$, the following Lemma results
\begin{lemma}
The channel capacity per source symbol is
\begin{align}\label{eqn_chcapacity}
C =\frac{1}{2M}\log_2\bigg(1+\frac{MP^*(1+(M-1)\rho_x)}{\sigma_n^2}\bigg).
\end{align}
\end{lemma}
\begin{proof}
Let $Y_{i,k}$ be the i'th encoder output at time instant $k$. Using Lemma C.2 and Theorem C.1 from \cite{Lapidoth10}, it can be shown that:
\begin{align}
\frac{1}{n}&\sum_{k=1}^n \EE\left[\left(\sum_{i=1}^M Y_{i,k}\right)^2\right] = \sum_{i=1}^M\left(\sum_{k=1}^n\EE[Y_{i,k}^2]\right)+\nonumber\\ &\ \ \ \ \ \ \ \ \ \ \ \ 2\sum_{i,j=1,i\neq
j}^M\left(\frac{1}{n}\sum_{k=1}^n\EE[Y_{i,k}Y_{j,k}]\right)\nonumber\\ &\stackrel{(a)}{\leq}MP+2\rho_x\sum_{i,j=1,i\neq
j}^M\left(\frac{1}{n}\sum_{k=1}^n\sqrt{\EE[Y_{i,k}^2]}\sqrt{\EE[Y_{j,k}^2]}\right)\nonumber\\ &\stackrel{(b)}{\leq}MP+2\rho_x\sum_{i,j=1,i\neq
j}^M\left(\frac{1}{n}\sqrt{\sum_{k=1}^n\EE[Y_{i,k}^2]}\sqrt{\sum_{k=1}^n\EE[Y_{j,k}^2]}\right)\nonumber\\ &\leq MP+2\rho_x\dbinom{M}{2}P
\end{align}
where $(a)$ comes from~\cite[Lemma C.2]{Lapidoth10} given $\rho_x\geq 0$, and $(b)$ is the Cauchy-Schwartz inequality. The resulting channel capacity is then~(\ref{eqn_chcapacity}).
\end{proof}

Now equate $R^*$ from (\ref{e:_vecGaussRD}) with $C$ in
(\ref{eqn_chcapacity}) and calculate the corresponding power $P^*$.
We get $D\geq D^*(\theta, M)$ with $D^*$ given in~(\ref{eqn_d_lowerbound}) and
\begin{equation}\label{eqn_MP}
P=P^*(\theta,M) =\sigma_n^2\frac{\prod_1^M{\max[\lambda_i/\theta,1]}-1}{M+M(M-1)\rho_x}
\end{equation}
The max and min in~(\ref{eqn_d_lowerbound}) and~(\ref{eqn_MP}) depends on $\rho_x$ and the SNR.  Since the special case $\rho_{ij}=\rho_x, \forall i,j$ is treated, there are two cases to consider: Only the first eigenvalue $\lambda_1$ is to be represented (the common information) and all eigenvalues, $\lambda_i, i\in[1,\cdots M]$, are to be represented. The validity of these two cases is the same as in~\cite{Lapidoth10}, i.e. SNR=$P/\sigma_n^2 \leq \rho_x/(1-\rho_x^2)$ and SNR=$P/\sigma_n^2 > \rho_x/(1-\rho_x^2)$ respectively. If one solve~(\ref{eqn_MP}) with respect to $\theta$ for these two cases and insert the result in~(\ref{eqn_d_lowerbound}), the bound~(\ref{e:Dist_bound_equalP}) results.
\end{proof}

Note that~(\ref{e:Dist_bound_equalP}) becomes a bound for an average transmit power constraint by setting $P=(P_1+P_2+\cdots+P_M)/M$.


\subsection{Zero-delay JSCC with collaborating encoders}\label{ssec:colaborative_sk}
Since collaboration makes it possible to construct a larger set of encoding operations, including all distributed strategies, the performance of distributed
coding schemes is upper-bounded by those that allow collaboration. Optimal zero-delay collaborative schemes therefore serve as a tighter bound for single letter schemes compared to the ones without restrictions on codeword length.

In the case of zero delay, the corresponding optimal collaborative
encoding operation is the mapping $\mathbb{R}^M\rightarrow \mathbb{R}$
from source to channel space, which minimizes $D$ at a given power
constraint. It has not yet been determined how such a mapping should be constructed in order to perform optimally. One can anyway get an
indication on how a scheme with collaborative encoders performs from schemes that are known to operate close to the distortion lower bound. Examples on known schemes with excellent performance are
\emph{Shannon-Kotel'nikov mappings} (S-K mappings)~\cite{ramstad-telektronikk,hekland_floor_ramstad_T_comm,Floor_Ramstad09,Akyol_rose_ramstad_itw10,Hu_garcia_lamarca_tcom} and \emph{Power Constrained Channel
Optimized Vector
Quantizers} (PCCOVQ)~\cite{fulds97a,fuldsethThesis}.  PCCOVQ can be considered similar to S-K mappings when the number of centroids is large and is therefore referred to as S-K mappings in the following. S-K mappings have previously been optimized for
memoryless Gaussian sources and channels when  $M$ source symbols are
transmitted on $N$ channel uses~\cite{ramstad-telektronikk,Floor_Ramstad09,hekland_floor_ramstad_T_comm,Akyol_rose_ramstad_itw10,Hu_garcia_lamarca_tcom,fulds97a,fuldsethThesis}.

For the problem at hand, by treating the collaborative encoders as one, and the $M$ sources as components of
a Gaussian vector source, S-K mappings with $N=1$ can be applied directly. S-K mappings can not be applied to the
distributed case, however, as their operation would require knowledge of all
source symbols simultaneously at each encoder~\cite{hekland_floor_ramstad_T_comm,ramstad-telektronikk,Floor_Ramstad09,fulds97a,fuldsethThesis}.

\subsection{Distributed linear JSCC: uncoded transmission}\label{sec:linear}
A simple way to construct zero-delay distributed encoders is to let
each sensor node scale its observations to satisfy the power
constraint, i.e. $f_m(x_m) = x_m\sqrt{{P}/{\sigma_x^2}}$.
With equal transmit power, the received signal becomes
\begin{equation}
z =\sqrt{\frac{P}{\sigma_x^2}}\left(Ms + \sum_{m=1}^M w_m\right)+n.
\end{equation}
At the receiver, MMSE decoding given by $\hat{x}_m = (\EE[x_m z]/\EE[z^2])z$, is
applied. One can show that the resulting end-to-end distortion becomes
\begin{equation}
\begin{split}
D &=\sigma_x^2-\frac{\EE[x_iz]^2}{\EE[z^2]}\\ &= \sigma_x^2\left(1-\frac{P(1+(M-1)\rho_x)^2}{P(M^2\rho_x+M(1-\rho_x))+\sigma_n^2}\right).
\end{split}
\end{equation}
Considering~(\ref{e:Dist_bound_equalP}), one can see that a linear mapping is optimal when $P/\sigma_n^2\leq \rho_x/(1-\rho_x^2)$, i.e. when only the common
information $s$ can be reconstructed.
\section{Distributed nonlinear JSCC: DQLC}\label{sec:DQLC}
In order to get closer to the bound in~(\ref{e:Dist_bound_equalP}) than uncoded transmission when $P/\sigma_n^2 > \rho_x/(1-\rho_x^2)$, nonlinear mappings are needed. A zero-delay hybrid
discrete-analog scheme, DQLC, where encoders 1 to $M-1$ are amplitude limited scalar quantizers and encoder $M$ consist of a limiter
followed by scaling, is presented here.

\subsection{Formulation of Encoders and Decoders}\label{ssec:SNQLC_enc_dec_gen}
Encoder $m$, $m\in [1,\ldots,M-1]$, first quantizes
$x_m$, then limits the quantizer output to a certain
range $\pm\kappa_m$, where $\kappa_m\in\mathbb{R}^+$, and further attenuates the result by $\alpha_m$. That is $f_m(x_m)=\alpha_m\ell_{\pm \kappa_m}[q_{\Delta_m}(x_m)]$.
 $q_{\Delta_m}(x_m)$ is a uniform midrise or midthread quantizer, where $q_{i_m}$ denotes centroid no. $i$ of quantizer $m$. Encoder $M$ only limits $x_M$
to $\pm\kappa_M$, then attenuates the result by $\alpha_M$. That is
$f_M(x_M)=\alpha_M \ell_{\pm \kappa_M}[x_M]$.  The
received signal becomes
\begin{equation}\label{e:rec_sign_SNQLC}
\begin{split}
&z=\sum_{m=1}^{M-1}\alpha_m (\ell_{\pm \kappa_m}[q_{\Delta_m}(x_m)])+\alpha_M (\ell_{\pm \kappa_M}[x_M])+n.
\end{split}
\end{equation}
In the following we choose $\alpha_1=1$ and $\kappa_1 = \infty$.

An example for $M=3$ is shown in Fig.~\ref{fig:SNQLC_corr_concept}.
\begin{figure}[h!]
    \begin{center}
    \subfigure[]{
            \includegraphics[width=.46\columnwidth]{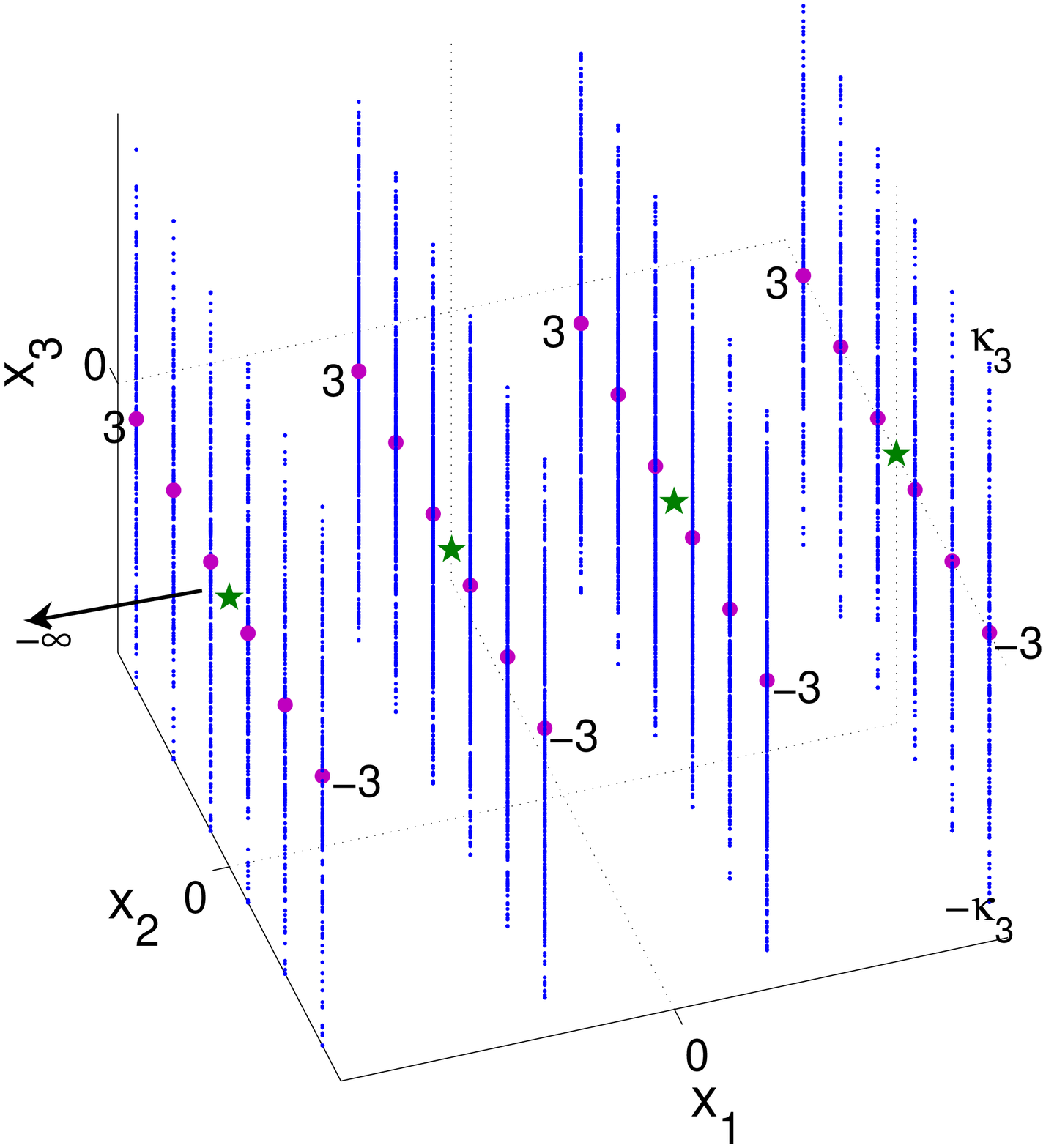}
        \label{fig:sqlc_scheme_source_r0}}
        \subfigure[]{
            \includegraphics[width=.46\columnwidth]{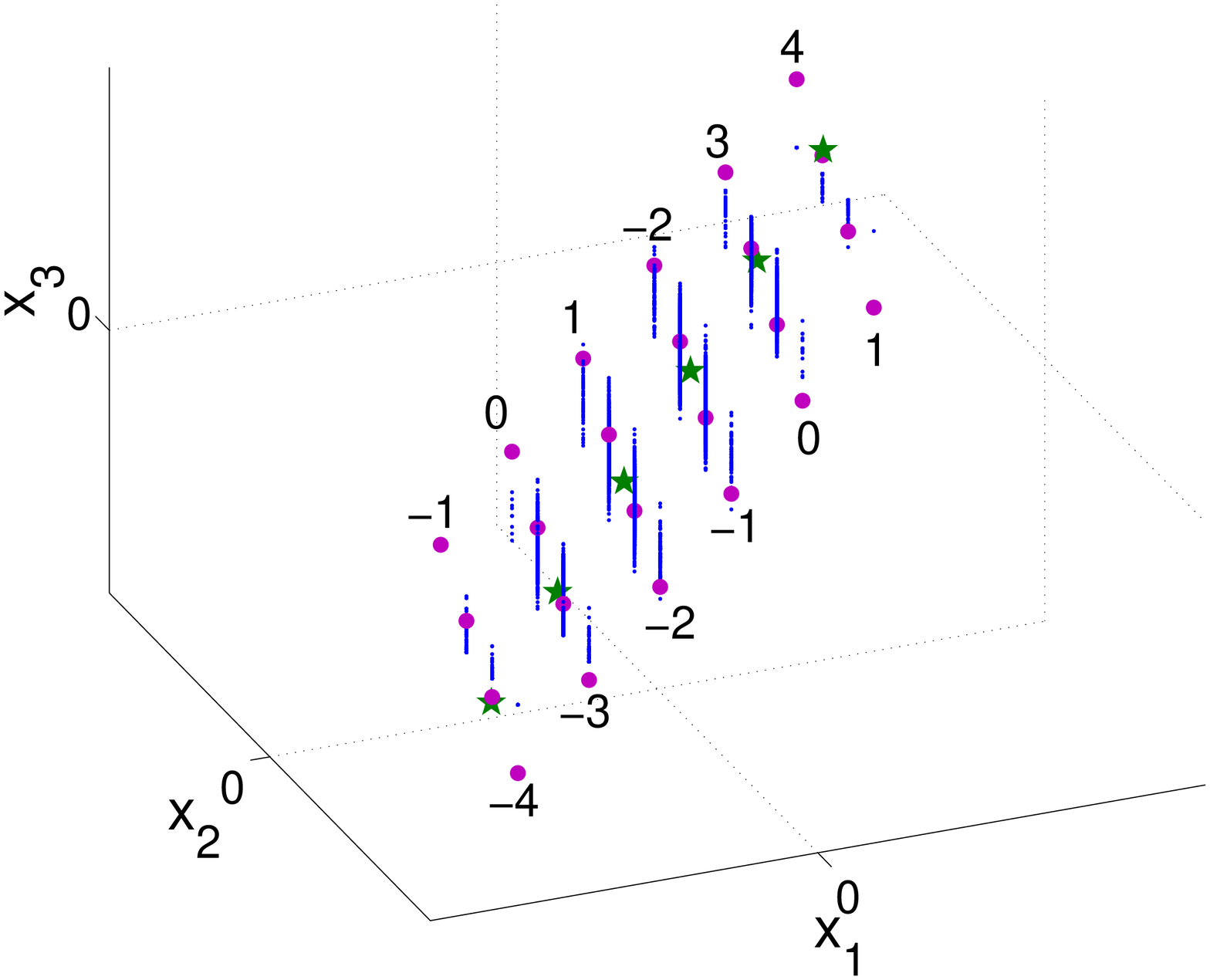}
        \label{fig:sqlc_scheme_source_r095}}
        \vspace{0.5cm}
        \subfigure[]{
            \includegraphics[width=0.9\columnwidth]{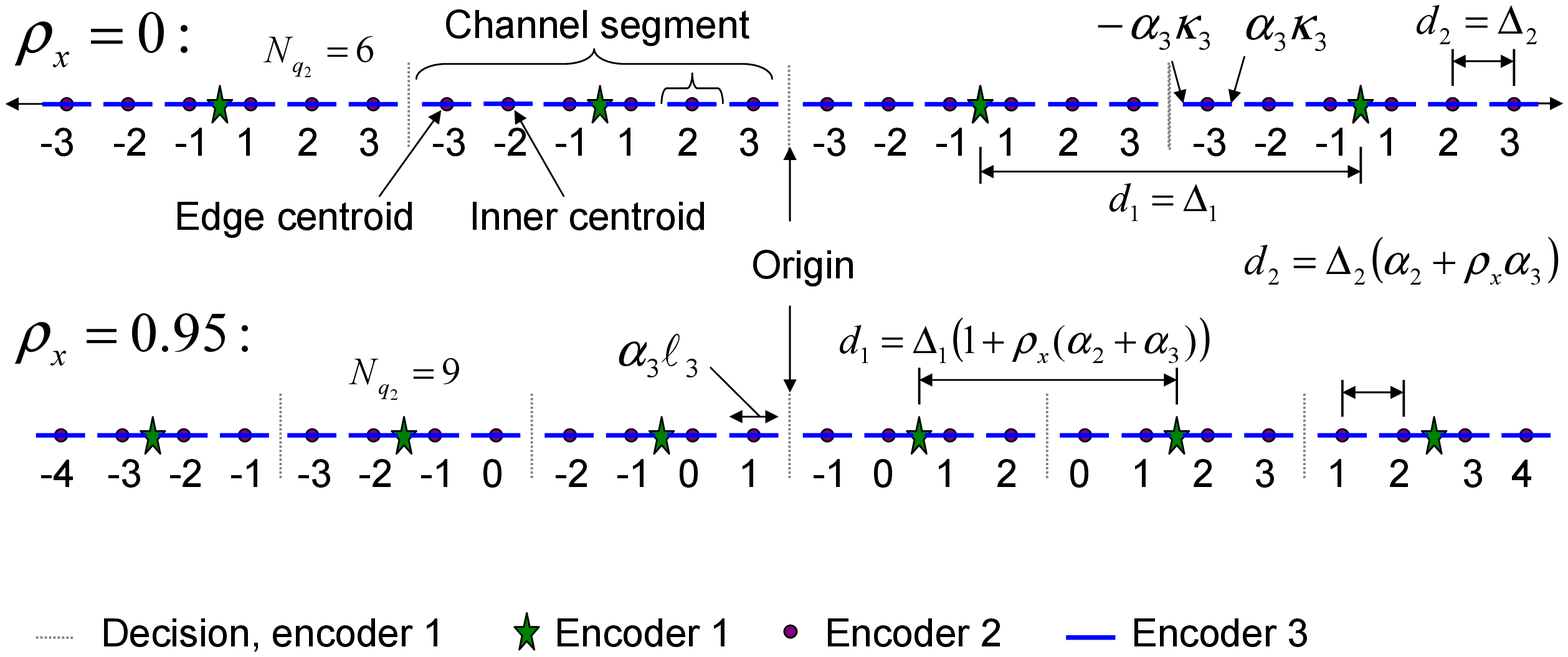}
        \label{fig:sqlc_scheme_channel}}
    \end{center}
    \caption{DQLC for $M=3$. Source space for: (a) $\rho_x=0$. (b) $\rho_x=0.95$. (c) Channel space. Here $\kappa_2=4$ when $\rho_x=0$ and $\kappa_2=5$ when $\rho_x=0.95$.}
     \label{fig:SNQLC_corr_concept}
\end{figure}
The encoders first create the segments in source space as shown in
Fig.~\ref{fig:sqlc_scheme_source_r0} ($\rho_x=0$)
and~\ref{fig:sqlc_scheme_source_r095} ($\rho_x=0.95$) through
quantization and limitation. These segments are attenuated by
$\alpha_m$ in such a way that the channel space structure shown in
Fig.~\ref{fig:sqlc_scheme_channel} results when GMAC sums over all
encoder outputs. To obtain this structure, $\alpha_1>\alpha_2>\cdots>\alpha_M$. From Fig.~\ref{fig:sqlc_scheme_source_r095} one can
see how DQLC is affected by correlation. As $\rho_x$ increases, the joint pdf
$p_\mathbf{x}(x_1,\cdots,x_M)$ narrows along all its minor axes,
effectively limiting each source segment. The operation $\ell_{\pm \kappa_m}$ then becomes obsolete. This effect results in reduced distortion. As
$\rho_x\rightarrow 1$ one can let $\Delta_m \rightarrow 0$,
$\alpha_m\rightarrow 1$ and $\kappa_m\rightarrow \infty, \forall m$,
and the DQLC becomes equivalent to uncoded
transmission.

To cancel interference at the receiver, sequential decoding is used: First an estimate of source $1$ is made. This estimate is then
subtracted from the channel output to estimate source $2$, and so on.
In order to make the correct decision on the output from quantizer $m$,
one must take into account that the midpoint of each \emph{channel segment} changes with $\rho_x$ (like $d_1$ and $d_2$ shown in Fig.~\ref{fig:sqlc_scheme_channel}).
Consider source 1: the first order moment of $p(\alpha_2 x_2,\ldots,\alpha_M x_M|x_1=q_{i_1})$ must be determined. Consider a
sub-division of $\mathbf{y}=[x_1\hspace{0.2cm}\cdots
\hspace{0.2cm}\alpha_M x_M]$ into $\mathbf{y}_a=[\alpha_2 x_2\hspace{0.2cm}\cdots \hspace{0.2cm}
\alpha_M x_M]^T$ and $\mathbf{y}_b= x_1$. By sub-dividing the
covariance matrix $K_\mathbf{y}=E\{\mathbf{y}\mathbf{y}^T\}$
according to~(\ref{e:subdiv_matrix}), Theorem~\ref{th:multvar_cond_moments} in the Appendix gives
$\EE \{\alpha_2 x_2,\ldots,\alpha_M x_M|x_1\}=\rho_x x_1
\begin{bmatrix} \alpha_2 &\cdots &\alpha_M \end{bmatrix}^T$. When
the transformed sources are summed together, centroids of
encoder 1 are shifted by the sum of all first order moments. When source 1 is
detected, one can subtract it from $z$ then use the same argument as above
for source 2, and so on. Let $N_{q_m}$ denote the number of centroids for quantizer $m$.
The estimate of sources $m\in[1,M-1]$ becomes
\begin{equation}\label{e:decode_two}
\begin{split}
&g_m(z)=\arg\min_{q_{i_m},i_m\in [1,N_{q_m}]}\\
&\bigg\|\bigg(z-\sum_{l=1}^{m-1}\alpha_l g_l(z)\bigg)-q_{i_m}\cdot\bigg(\alpha_m+\rho_x\sum_{k=m+1}^M \alpha_k\bigg)\bigg\|^2.
\end{split}
\end{equation}
The estimate of the $M$'th source is then found from
\begin{equation}
g_M(z)=\beta\bigg(z-\sum_{m=1}^{M-1}\alpha_m g_m(z)\bigg),
\end{equation}
where $\beta$ is a scaling factor. The MSE is further minimized by computing
\begin{equation}
\hat{x}_m = [x_m|g(x_1),\cdots,g(x_M)],\hspace{2mm} m\in[1,\cdots,M]
\end{equation}

The optimal parameters
$\alpha_m$, $\Delta_m$ and $\kappa_m$ need to be found. To ensure that each source is uniquely decodable, the relation between the $\alpha$'s must be $1=\alpha_1\geq\alpha_2\geq\alpha_3\geq\cdots\geq\alpha_M > 0$.
When optimizing DQLC, we look at the number of centroids of encoder $m$, $N_{q_m}$, instead of
the clipping $\kappa_m$ (except for encoder $M$). The parameters
$\overrightarrow{\Delta}=[\Delta_1,\cdots,\Delta_{M-1}]^T$,
$\overrightarrow{N_{q}}=[N_{q_2},\cdots,N_{q_{M-1}}]^T$,
$\overrightarrow{\alpha}=[\alpha_2,\cdots,\alpha_M]^T$, $\beta$ and
$\kappa_M$ are optimized for an average power constraint $P$.
With $D$ as in~(\ref{e:tot_dist})
\begin{equation}\label{e:opt_problem_sqlc}
\min_{\overrightarrow{\Delta},\overrightarrow{N_{q}},\overrightarrow{\alpha},\beta,\kappa_M : \sum_{m=1}^M P_m \leq M P} D.
\end{equation}
To calculate $D$ and $P_m$, the channel output pdf is needed.
\subsection{Calculation of channel output pdf}\label{ssec:pdf_calc_gen}
To determine the relevant pdf, $\kappa_m$, $\Delta_m$ and $\alpha_m$ must be chosen so that channel segments do not overlap (the configuration in Fig.~\ref{fig:sqlc_scheme_channel}).
Since the outputs of encoders 1 to $M-1$ are discrete, their
distributions can be expressed by point probabilities, which are straight forward
to calculate. For the output of encoder $M$, two
cases must be considered: 1) $\rho_x$ close enough to 0 for $\ell_{\pm \kappa_M}[x_M]$ to be
significant. 2) $\rho_x$ so close to $1$ that $p_\mathbf{x}(x_1,\cdots,x_M)$ effectively limits the segments so that $y_M=f_M(x_M)\approx \alpha_M x_M$.

Case 1): The whole range of $y_M=f_M(x_M)$ is now represented on each
channel segment, as can be seen by studying the blue lines in Fig.~\ref{fig:sqlc_scheme_source_r0}
and~\ref{fig:sqlc_scheme_channel}. The pdf of encoder $M$ at the channel output is determined by assuming that sources $1$ to $M-1$ are subtracted. Then the same analysis as for the $M=2$ case in~\cite{Floor_Kim_Wernersson11_TCOM} can be applied. With the mean given by
\begin{equation}\label{e:sum_pdf_mean}
\mu = q_{i_{M-1}}(\alpha_{M-1} +\rho_x \alpha_M)+ \sum_{m=1}^{M-2} q_{i_{m}},
\end{equation}
the same arguments as in~\cite{Floor_Kim_Wernersson11_TCOM} lead to
\begin{equation}\label{e:pdf_zr0}
\begin{split}
&p_{z_M (\overrightarrow{i}_m)}(z_M)_{\kappa_M} =\frac{1}{\Sigma}\int_{-\alpha_M\kappa_M}^{\alpha_M\kappa_M} e^{-\frac{\alpha_M^2\sigma_x^2 (z_M-\mu-y)^2+\sigma_n^2 y^2}{2\alpha_M^2\sigma_x^2\sigma_n^2}}\mbox{d}y\\ &+p_o\big(p_n(z_M-\mu-\alpha_M\kappa_M)+p_n(z_M-\mu+\alpha_M\kappa_M)\big),
\end{split}
\end{equation}
where $p_o = Pr\{x_M \geq \kappa_M\}$, $\Sigma=2\pi\alpha_M\sigma_x\sigma_n$, $p_n$ is the noise pdf, and $z_M$ denotes the received signal when sources $1\ldots M-1$ are subtracted. Fig.~\ref{fig:Calc_pdf} shows this pdf when $\mu=0$.

Case 2): When $\rho_x$  is close to 1, each source segment, and therefore also each channel segment, will no longer be equivalent but contain somewhat different (but intersecting) \emph{ranges} of $y_M$ (this also applies to sources 2 to $M-1$ given the others). This can be seen by studying the blue lines in Fig.~\ref{fig:sqlc_scheme_source_r095} and Fig.~\ref{fig:sqlc_scheme_channel}. One can now assume that $y_M=f_M(x_M)\approx \alpha_M x_M$. To determine the relevant pdf, $p(y_M|q_{i_1},\ldots,\alpha_{M-1} q_{i_{M-1}})$, after summation over GMAC must be found. Consider a
sub-division of $\mathbf{y}=[x_1\hspace{0.2cm}\cdots
\hspace{0.2cm}\alpha_M x_M]$ into $\mathbf{y}_a=\alpha_M x_M$
and $\mathbf{y}_b= [x_1 \hspace{0.2cm} \alpha_2 x_2
\hspace{0.2cm}\cdots \hspace{0.2cm}\alpha_{M-1}x_{M-1}]^T$. By
sub-dividing the covariance matrix $K_\mathbf{y}=\EE\{\mathbf{y}\mathbf{y}^T\}$ according
to~(\ref{e:subdiv_matrix}), Theorem~\ref{th:multvar_cond_moments} in the Appendix gives the second order moment
\begin{equation}\label{e:moments_pdf}
\begin{split}
&\Sigma_{aa\cdot b}=\sigma_x^2\alpha_M^2 \bigg(1-\frac{(M-1)\rho_x^2}{1+(M-2)\rho_x}\bigg).
\end{split}
\end{equation}
By inserting~(\ref{e:moments_pdf}) into~(\ref{e:gen_cond_pdf}) (in the Appendix) and~(\ref{e:sum_pdf_mean}) for the mean, the
wanted pdf results. After addition of noise, the resulting pdf
is found by the convolution $p(y_M|q_{i_1},\ldots,\alpha_{M-1} q_{i_{M-1}})\ast p_n(n)$~\cite[181-182]{papoulis02}, and thus
\begin{equation}\label{e:pdf_z3_rho}
p_{z_M(\overrightarrow{i}_m)}(z_M)_{\gamma_M}= \frac{1}{\sqrt{2\pi\big(\sigma_n^2+\Sigma_{aa\cdot b}\big)}}{e^{-\frac{1}{2}\frac{(\mu-z_M)^2}{\sigma_n^2+\Sigma_{aa\cdot b}}}}.
\end{equation}

The validity of~(\ref{e:pdf_zr0}) and~(\ref{e:pdf_z3_rho}) must be determined. Since the correlation between any two sources is assumed to be the same, one can focus on the $x_M,x_{M-1}$ plane. Fig.~\ref{fig:SQLC_corr_concept} provides a geometrical picture for the following discussion.
\begin{figure}[h]
    \begin{center}
         \includegraphics[width=1\columnwidth]{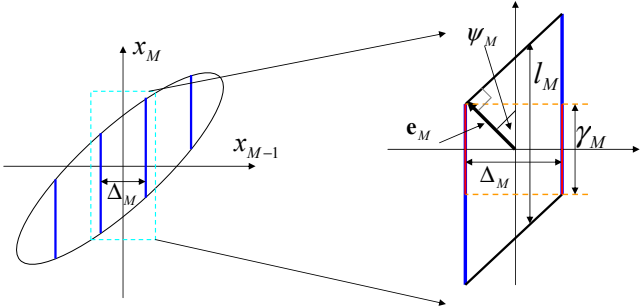}
    \end{center}
    \caption{DQLC seen in the $x_{M-1},x_M$-plane when $\rho_x= 0.95$. The expanded rectangle on the right can be applied to calculate anomalous errors for source $M$.}
     \label{fig:SQLC_corr_concept}
\end{figure}
Let $l_M$ denote the length of the portion of the $x_M$ axis that contains the \emph{significant probability mass}\footnote{``Significant probability mass'' means all events except those with very low probability.} given $x_{M-1}$ (or $q_{i_{M-1}}$).  $l_M=2\sqrt{\vartheta}\|\mathbf{e}_M\|=2 b_M\sqrt{\vartheta\lambda_M}$, where $\|\mathbf{e}_M\|=b_M\sqrt{\lambda_M}$ denotes the length of the minor axis of the ellipse depicted (the source space). $b_M$ ($\approx 4$) is a parameter determining the width of the ellipse shown, and should be chosen so that the significant probability mass is within this ellipse. $\vartheta=(l_M/(2\|\mathbf{e}_M\|))^2$ depends on $\rho_x$: If $\rho_x=0$, then $\vartheta = 1$ since the source space is rotationally invariant (a sphere). If $\rho_x > \approx 0.7$ then $\vartheta\approx 1/\cos^2 (\psi_M)=1/\cos^2 (\pi/4)=2$. That is, $\vartheta\in[1,2]$, depending on $\rho_x$. (\ref{e:pdf_zr0}) is valid when $l_M > 2\kappa_M$
while~(\ref{e:pdf_z3_rho}) is valid when $l_M \leq 2\kappa_M$.

The total channel output pdf is given by
\begin{equation}\label{e:GMAC_pdf_tot}
\begin{split}
p_z(z)=\sum_{\overrightarrow{i}_m} Pr\{\overrightarrow{i}_m\}  p_{z_M | \overrightarrow{i}_m}.
\end{split}
\end{equation}
$p_{z_M |\overrightarrow{i}_m}$ is given by ~(\ref{e:pdf_z3_rho}) or~(\ref{e:pdf_zr0}) depending on whether $l_1 > 2\kappa$ or not.
$\overrightarrow{i}_m = [i_1,i_2,\cdots,i_{M-1}]$, and $Pr\{\overrightarrow{i}_m\}= Pr\{q_{i_1}\} Pr\{q_{i_2}|q_{i_1}\} \cdots Pr\{q_{i_{M-1}}|q_{i_1},\cdots, q_{i_{M-2}}\}$.
Fig.~\ref{fig:Channel_output_pdf} shows an example of the channel output pdf when $M=3$ at 30 dB channel SNR.
\begin{figure}[h]
    \begin{center}
        \subfigure[]{
            \includegraphics[width=0.45\columnwidth]{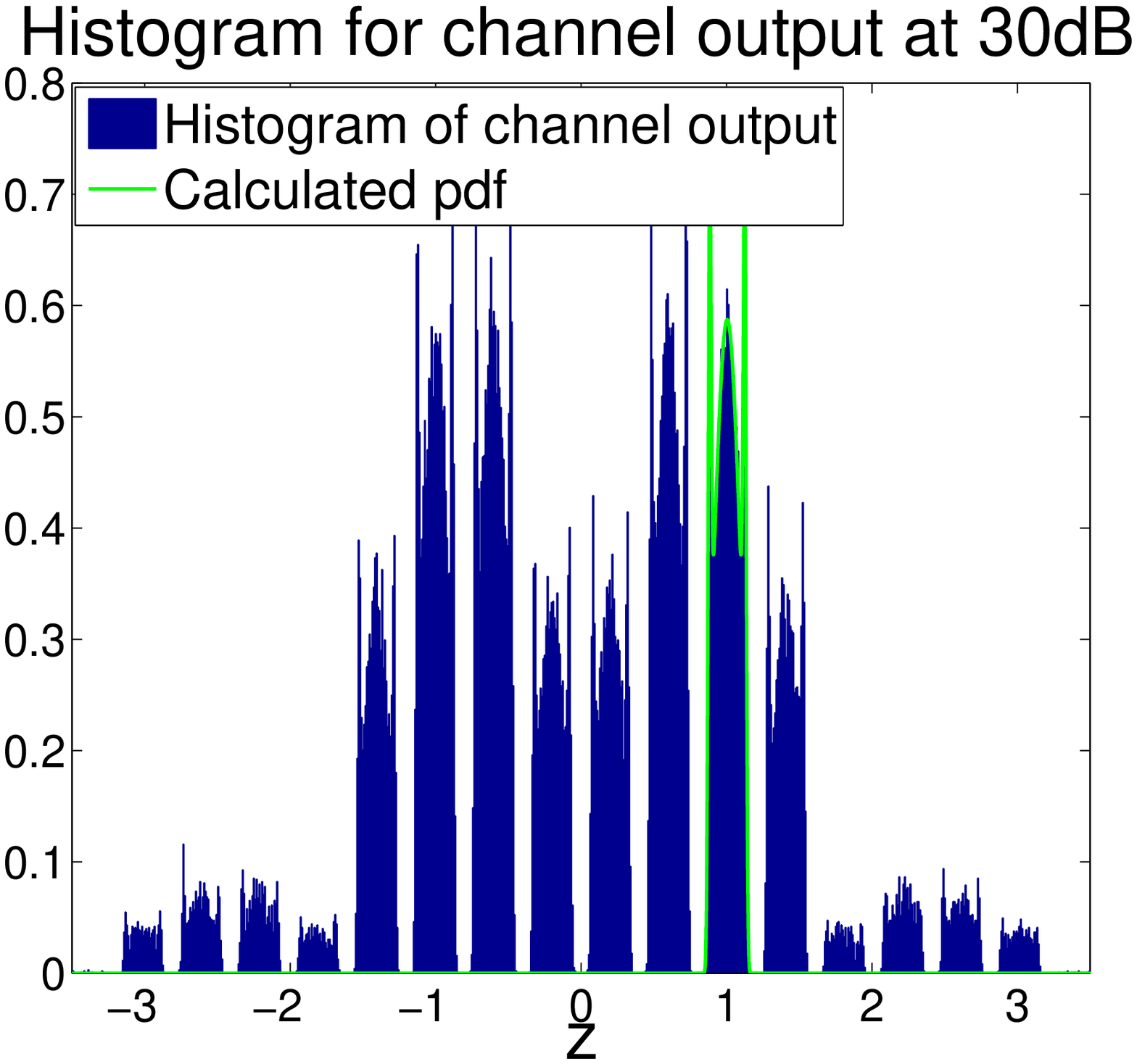}
        \label{fig:Histogram}}
        \hfil
        \subfigure[]{
            \includegraphics[width=0.45\columnwidth]{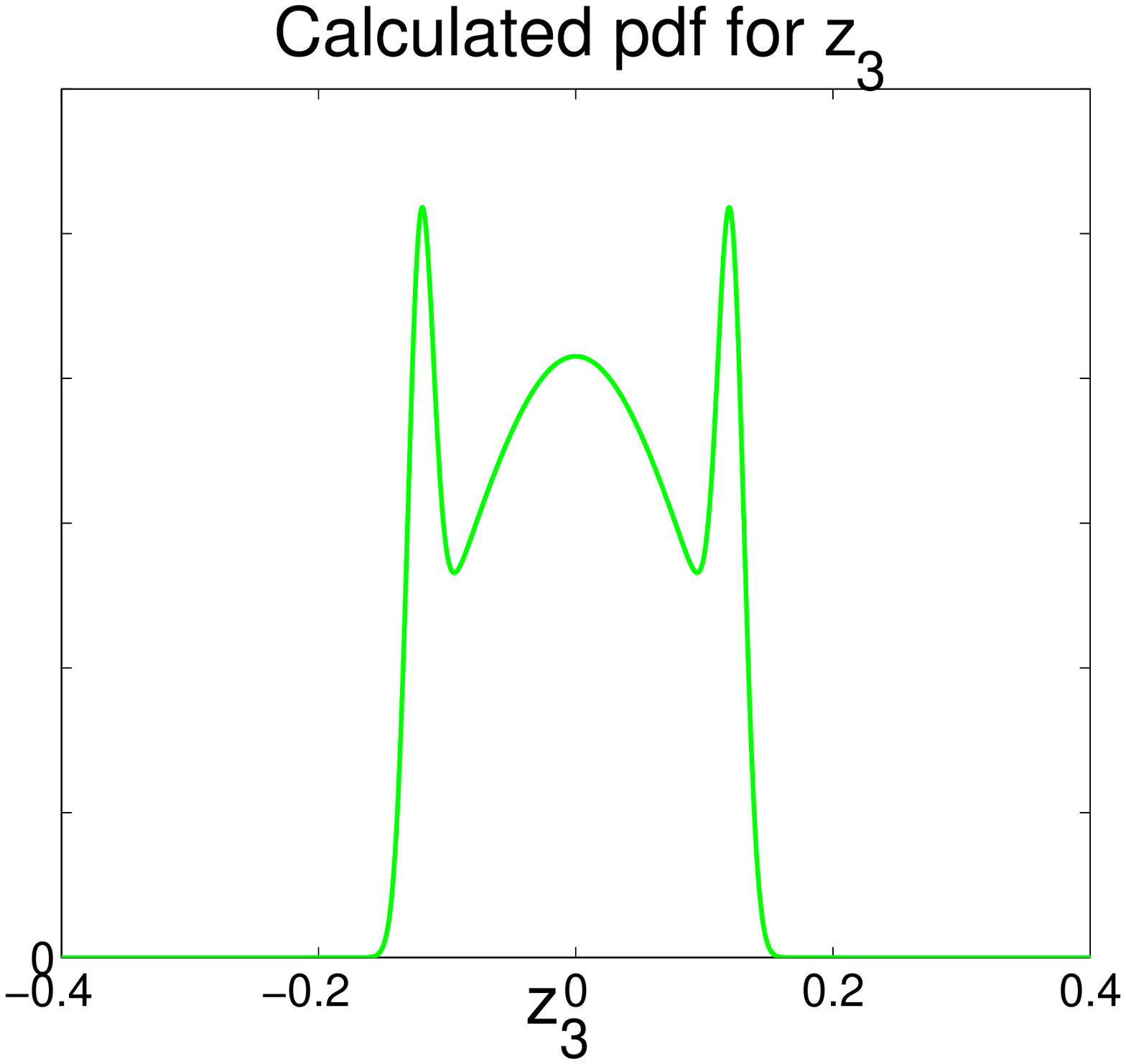}
        \label{fig:Calc_pdf}}
    \end{center}
    \caption{Channel output pdf when $M=3$. (a) Histogram where 4 centroids of encoder 1 is in use and $N_{q_{2}}=4$. The green curve shows~(\ref{e:pdf_zr0}) when the first positive centroid (index 1) are transmitted for both encoder 1 and 2. (b)~(\ref{e:pdf_zr0}) shown centered at the origin ($\mu=0$), that is when centroids for encoders 1 and 2 are given.}
    \label{fig:Channel_output_pdf}
\end{figure}
\subsection{Distortion and Power calculation}\label{ssec:dist_calcM}
To calculate the distortion, we use an approach similar to
that in~\cite{Floor_Kim_Wernersson11_TCOM}, where the total
distortion is divided into several contributions.

\subsubsection{Distortion and power for source $M$}\label{ssec:dist_M}
The distortion for source $M$ can be divided into three contributions: clipping distortion, anomalous distortion and channel distortion.

\emph{Clipping distortion}:
At encoder $M$ we only have
distortion from limitation, $\bar{\varepsilon}_{\kappa_M}^2$, an event with probability
$Pr\{|x_M|>\kappa_M\}$ and error $(x_M-\kappa_M)^2$:
\begin{equation}\label{e:cutoff_noise_M}
\bar{\varepsilon}_{\kappa{_M}}^2=2\int_{\kappa_M}^\infty (x_M-\kappa_M)^2 p_x(x_M)\mbox{d}x_M. 
\end{equation}

The distortion from channel noise can be split into two
contributions: \emph{Channel distortion} and \emph{anomalous
distortion}.

\emph{Anomalous distortion:} results from a
\emph{threshold effect} (see e.g.~\cite{shann49} or~\cite{Merhav_N_it_2011}) and occurs every time the centroid for one or several of quantizers 1 to $M-1$ is erroneously selected. This error leads to a "jump" from one channel
segment to another (see Fig.~\ref{fig:sqlc_scheme_channel}) for source $M$, resulting in large decoding errors. In the worst case scenario ($\rho_x=0$) large positive and negative values are interchanged.
The anomalous distortion is difficult to calculate exactly. We therefore look at an approximation valid around the optimal operation of DQLC. Note that jumps among centroids of in encoder 1 are
most fatal since this leads to anomalous distortion for all other sources. Jumps among centroids of encoder 2 lead to anomalous errors for sources 3 to $M$, and so on. Therefore, the probability for jumps among centroids of quantizer 1 to $M-2$ should be at least as small as the
probability for jumps among centroids of quantizer $M-1$. By assuming that encoders 1 to $M-1$ are constructed correctly, one can calculate an upper bound on anomalous distortion for source $M$ by considering jumps among centroids of quantizer $M-1$ only. We are then in the same situation
as the $M=2$ case in~\cite{Floor_Kim_Wernersson11_TCOM}, and can calculate the anomalous errors from the $x_{M-1},x_M$ plane shown in Fig.~\ref{fig:SQLC_corr_concept}. Two cases must be considered: $l_M > 2\kappa_M$ and $l_M \leq 2\kappa_M$.

Assume first $l_M > 2\kappa_M$:
anomalous errors happen whenever $y_M+n \geq d_{M-1}/2=\Delta_{M-1}(\alpha_{M-1}+\alpha_M\rho_x)/2$ (see $M=3$ case in Fig.~\ref{fig:SNQLC_corr_concept}), i.e. the probability for anomalies are
\begin{equation}\label{e:th3}
\begin{split}
p_{thM}&=
2 \int_{\frac{\Delta_{M-1}}{2}(\alpha_{M-1}+\alpha_M\rho_x)}^\infty p_{z_M (\overrightarrow{0}_m)}(z_M)_\kappa\mbox{d}z_M,
\end{split}
\end{equation}
where $p_{z_M}(z_M)_\kappa$ is given in~(\ref{e:pdf_zr0}). Since different values of $\overrightarrow{i}_m$ basically shifts $p_{z_M}(z_M)_\kappa$, the relevant probability can be calculated by setting $\mu=0$ in~(\ref{e:pdf_zr0}). $p_{thM}$ must be found numerically since the integral in~(\ref{e:th3}) has no closed form solution. The anomalous error's magnitudes are the same regardless of which segment we are at, and are bounded by $(2\kappa_M)^2$, since $\kappa_M$ gets interchanged with $-\kappa_M$ when channel segments start to overlap.

Now assume $l_M \leq 2\kappa_M$: the probability for this event is given by
\begin{equation}\label{e:th3r}
\tilde{p}_{thM}=2\int_{\frac{\Delta_{M-1}}{2}(\alpha_{M-1}+\alpha_M\rho_x)}^\infty p_{z_M(\overrightarrow{0}_m)}(z_M)_{\gamma_M}\mbox{d}z_M,
\end{equation}
where $p_{z_M(\overrightarrow{0})}(z_M)_{\gamma_M}$ is given by~(\ref{e:pdf_z3_rho}), where one again can assume that $\mu=0$.
When $\rho_x$ gets close to one, the anomalous errors, $\gamma_M$, become smaller. This can be seen in Figs.~\ref{fig:SNQLC_corr_concept} and~\ref{fig:SQLC_corr_concept}. Since $\gamma_M$ is approximately the same in magnitude regardless of which channel segment we jump from (see Fig.~\ref{fig:SQLC_corr_concept}), it can be calculated by considering jumps between the segments closest to the origin in the source space. The parallelogram shown on the right hand side of Fig.~\ref{fig:SQLC_corr_concept} applies to approximate $\gamma_M$.  Since $\psi_M=\pi/4$,  the parallelogram consists of a square
and two right triangles with both catheti equal to $\Delta_M$. This further implies that $\gamma_M\approx l_M-\Delta=2 b_M\sigma_x \sqrt{\vartheta(1-\rho_x)}-\Delta_M$,
where $\vartheta\approx2$ since $\rho_x$ is large (see Section~\ref{ssec:pdf_calc_gen}).

The anomalous distortion becomes
\begin{equation}\label{e:anM_noise_gen}
\bar{\varepsilon}_{anM}^2=
\begin{cases}
4 p_{thM} \kappa_M^2, & l_M > 2\kappa_M, \vspace{0.3cm}\\ \tilde{p}_{thM}\gamma_M^2, & l_M \leq 2\kappa_M.
\end{cases}
\end{equation}

\emph{Channel distortion}: Let $\tilde{x}_M=\ell_{\pm \kappa_M}[x_M]$. With no threshold effect occurring
the noise is additive and given by
\begin{equation}\label{e:chM_noise_gen}
\begin{split}
\bar{\varepsilon}_{C_M}^2 &= 
E\{(\tilde{x}_M-(\alpha_M\tilde{x}_M+n)\beta)^2\}\\&\approx\sigma_x^2(1-\alpha_M\beta)^2+\beta^2\sigma_n^2.
\end{split}
\end{equation}
The last approximation is based on the assumption that $E\{\tilde{x}_M\}\approx\sigma_x^2$.

\emph{Power}: With $p_o=Pr \{x_M\geq \kappa_M\}$, the output power from encoder $M$ becomes
\begin{equation}\label{e:power_M}
P_M = \int_{-\alpha_M\kappa_M}^{\alpha_M\kappa_M}y_M^2 p_{y_M}(y_M)\mbox{d}y_M+2 p_o \alpha_M^2\kappa_M^2.
\end{equation}
%
%
\subsubsection{Distortion and power for source $1$ to $M-1$}\label{ssec:dist_1to_m1}  Here we have quantization- and limitation distortion from the encoding process and channel distortion from channel noise.

\emph{Quantization and limitation}:
These contributions are equivalent to granular- and overload
distortion from the quantization process, that is
\begin{equation}\label{e:q_noise_gen}
\begin{split}
&\bar{\varepsilon}_{q,m}^2
= 2\sum_{i=1}^{\frac{Nq_m}{2}-1} \int_{(i-1)\Delta}^{i\Delta} \bigg(x_m-(i-1)\Delta-\frac{\Delta}{2}\bigg)^2 p_x(x_m)\mbox{d}x_m\\
&+2\int_{\big(\frac{Nq_m}{2}-1\big)\Delta}^\infty \bigg(x_m - \bigg(\frac{Nq_m}{2}-1\bigg)\Delta-\frac{\Delta}{2}\bigg)^2 p_x(x_m)\mbox{d}x_m.
\end{split}
\end{equation}

\emph{Channel distortion}:
The distortion from channel noise is given by 
\begin{equation}\label{e:nq_ndist_gen}
\begin{split}
  &\bar{\varepsilon}_{ch_t,m} =\overbrace{\int\cdots \int}^{M-\text{fold}} \int p_{\mathbf{x}}(x_1,\ldots,x_M)p\big(z | y_1,\ldots,y_M\big)\\
  &\big[\tilde{y}_m - \hat{x}_m(g_1(z),\ldots,g_M(z))\big]^2 \mbox{d}z \mbox{d}x_1\cdots \mbox{d}x_M,
\end{split}
\end{equation}
where $y_m=f_m(x_m)$, and $\tilde{y}_m$ denotes the quantized and
limited $x_m$. The relevant pdf can be derived
from~(\ref{e:GMAC_pdf_tot}).

As for the $M$'th
source, the distortion can be divided into channel distortion and
anomalous distortion, where channel distortion refers to jumps
among neighboring centroids and anomalous distortion refers
to the situation where large errors occur due to jumps from one
quantized ``segment'' to another. Take $M=3$: anomalies occur for $x_2$ and $x_3$ when the channel
noise takes us across the decision border for encoder 1 in
Fig.~\ref{fig:SNQLC_corr_concept} (``segment'' now refers to the collection of purple dots between each decision border of encoder 1). Anomalies do not
occur for source 1, i.e. when a centroid for encoder one is erroneously detected. For general $M$, anomalies occur for sources
$x_{m+1},\ldots,x_M$ when there is a channel error for source
$x_m$. The anomalous errors
for source $2,\ldots,M-1$ can be derived in a similar way as for
source $M$, as  illustrated in Section~\ref{sec:Ex_SNQLC_M3}. Channel
distortion for source $m$ is proportional to $\Delta_m^2$ and its probability can be determined from the probability for anomalous errors for source $m+1$
(e.g. the probabilities for channel distortion for $x_{M-1}$ can be calculated using~(\ref{e:th3}),~(\ref{e:th3r}), as illustrated in Section~\ref{sec:Ex_SNQLC_M3})

\emph{Power}:
The power from encoder $1$ to $M-1$ is given by\footnote{Note that~(\ref{e:q_noise_gen}) and~(\ref{e:power_i}) is derived for midrise quantizers. A similar expression can be derived for midthread quantizers.}
\begin{equation}\label{e:power_i}
P_m = 2\alpha_m^2\sum_{i=1}^{N_{q_m}/2}p_i\bigg((i-1)\Delta_m+\frac{\Delta_m}{2}\bigg)^2 
\end{equation}
where $p_i=Pr\{(i-1)\Delta_m < x_m \leq i \Delta_m\}$.

\subsection{High SNR analysis}\label{ssec:high_snr_dqlc}
From the previous section it is clear that a closed form expressions describing the DQLC in general is hard, if at all possible to find. One can, however, find closed form expressions that approximate the distortion well at high SNR.  These expressions can further be used to determine how well DQLC performs at high SNR as a function of both $M$ and $\rho_x$.

The performance of DQLC is compared to \emph{Performance upper bound}, i.e. the signal-to-distortion ratio SDR=$\sigma_x^2/D$. Assuming that SNR$\rightarrow\infty$, the bound ~(\ref{e:Dist_bound_equalP}), for the case SNR=$P/\sigma_n^2 >\rho_x/(1-\rho_x^2)$, becomes
\begin{equation}
\text{SDR}\approx \sqrt[M]{\frac{M\text{SNR}}{\big(1-\rho_x\big)^{M-1}}}=\varrho_M.
\end{equation}
The symbol, $\varrho_M$, is introduced to have a compact representation for later derivations.
Due to the fact that the code word length is short, there is a significant variance around the mean length of any stochastic vector~\cite[p. 324]{wozandj65} (for a normalized i.i.d. Gaussian vector $\bar{\mathbf{x}}$ of dimension $N$, $\text{Var}\{\|\bar{\mathbf{x}}\|\}=2\sigma_x^4/N$), making exact analysis difficult. To obtain closed form expressions, only the distortion terms that are dominant at high SNR are taken into account. It is further assumed that $\sigma_x=1$. Take source $M-1$: channel errors for this source and anomalous errors for source $M$ can (nearly) be avoided by assuming a distance $\Delta_{M-1} > 2\sqrt{(\alpha_M l_M/2)^2 + (b_n \sigma_n)^2}$ between each centroid (the purple dots in Fig.~\ref{fig:sqlc_scheme_channel}). $b_n=b_M$ ($\approx 4$) is a constant that must be chosen so that the significant probability mass of the noise is within $2 b_n \sigma_n$. A similar argument can be used for the other encoders. To quantify the magnitude of $\Delta_m$, Fig.~\ref{fig:DQLC_corr_concept_x2}, depicting the $x_1,x_2$ plane for the $M=3$ case after quantization, applies.
\begin{figure}[h]
    \begin{center}
     \includegraphics[width=0.9\columnwidth]{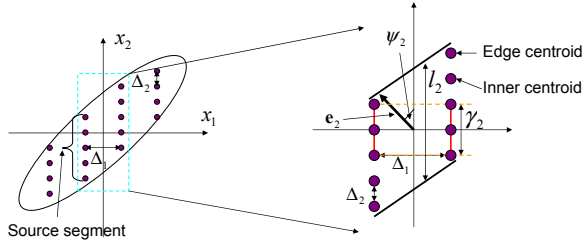}
    \end{center}
    \caption{$x_1,x_2$ plane of DQLC with $M=3$ when $\rho_x= 0.95$. The expanded rectangle on the right can be applied to calculate anomalous errors for source $2$.}
     \label{fig:DQLC_corr_concept_x2}
\end{figure}
From Fig.~\ref{fig:SNQLC_corr_concept} and~\ref{fig:DQLC_corr_concept_x2} one can convince oneself that if $\Delta_{m} > 2\alpha_{m+1} l_{m+1}/2$, the integrals in~(\ref{e:nq_ndist_gen}) can be avoided, since no distortion results from channel noise. For example, if $\Delta_1> 2\alpha_{2} l_{2}/2$ in Fig.~\ref{fig:sqlc_scheme_channel}, the green stars will not be confused. One can derive from Fig.~\ref{fig:DQLC_corr_concept_x2} that $l_{m+1}=2 b_{m+1} \sqrt{\vartheta_{m+1} (1-\rho_x)}$ (as was done for $l_M$ in section~\ref{ssec:pdf_calc_gen}). For large SNR, $\kappa_m$ become so large that one may neglect the $\ell_{\pm \kappa_m}[x_m]$ operation. The high-rate approximation to a scalar
quantizer, $\Delta_m^2/12$, then applies to quantify distortion. To avoid constrained optimization, $D_1$ is further scaled by $P_1= M P - \sum_{i=2}^M P_i$. For convenience, we also scale $D_m$ by $P_m$ prior to quantization (instead of after). Note that when SNR gets large enough, $\kappa$ becomes so large and $\Delta_m$ becomes so small that $P_m\approx \alpha_m^2\sigma_x^2$.

The distortion at high SNR can be approximated by
\begin{equation}\label{e:dist_high_snr}
\begin{split}
&D_1 \approx \frac{\alpha_2^2 C_2(1-\rho_x)}{3\bigg(M P -\sum_{i=2}^M\alpha_i^2\bigg)}\\& D_m \approx \frac{\alpha_{m+1}^2 C_{m+1}(1-\rho_x)}{3\alpha_m^2 },\hspace{0.1cm} m\in[2,M-2]\\
&D_{M-1} \approx \frac{\alpha_M^2 C_M(1-\rho_x) + (b_n \sigma_n)^2}{3\alpha_{M-1}^2}, \hspace{0.3cm} D_M\approx\frac{\sigma_n^2}{\alpha_M^2},
\end{split}
\end{equation}
where $C_m = b_m^2\vartheta_m^2 $. For $D_M$, the high SNR approximation of~(\ref{e:chM_noise_gen}) was used. 

To determine the optimal performance, the optimal $\alpha_m^2$ needs to be found. The obvious way is to solve ${\nabla_{\overrightarrow{\alpha}} \big[D_1+D_2+\cdots+D_M\big]}=0$ with
respect to $\alpha_m^2$. But since these are equations of order $\geq 4$, analytical solutions can not be found, and a different approach must be chosen.
From the $M=2$ case in~\cite{Floor_Kim_Wernersson11_TCOM} is is known that the distortion of DQLC is a constant times the bound $\varrho_2$. We therefore
hypothesize that this is the case for general $M$ as well: By choosing $\alpha_M^2=\sigma_n^2 \mathcal{K} \varrho_M$,
then $D_M=\mathcal{K} \varrho_M$. Since we want $D_1=D_2=\cdots =D_M=1/(\mathcal{K}\varrho_M)$, the $\mathcal{K}$ that satisfies this relation must be determined.
By solving $1/D_{M-1} = \mathcal{K}\varrho_M$, then
\begin{equation}
\begin{split}
\alpha_{M-1}^2&=\frac{\sigma_n^2}{3} C_M (1-\rho_x) \big(\mathcal{K}\varrho_M\big)^2\\
&=\frac{\alpha_M^2}{3} C_M (1-\rho_x) \big(\mathcal{K}\varrho_M\big),
\end{split}
\end{equation}
assuming $b_n\sigma_n\approx 0$. By continuing to solve $1/D_{m}=\mathcal{K}\varrho_M$ with respect to $\alpha_m^2$ for $m$ from $M-2$ to $2$ using the previously derived $\alpha_{m+1}^2$, one can show that
\begin{equation}\label{e:optalph_high_SNR}
\begin{split}
\alpha_i^2&=
  \frac{\alpha_M^2}{3^{M-i}} \bigg(\prod_{j=i+1}^M C_j\bigg) (1-\rho_x)^{M-i} \big(\mathcal{K}\varrho_M\big)^{M-i}.
\end{split}
\end{equation}
Finally, the $\mathcal{K}$ that makes $1/D_1 = \mathcal{K}\varrho_M$, must be found. Expanding the sum $\sum_{i=2}^M \alpha_i^2$ using~(\ref{e:optalph_high_SNR}), and setting $C_{M+1}=1$, one can show that
\begin{equation}\label{e:alsum_pre}
\sum_{i=2}^M \alpha_i^2 =\alpha_M^2 \sum_{k=1}^{M-1} \bigg(\frac{(1-\rho_x)\mathcal{K}\varrho_M}{3}\bigg)^{k-1}\prod_{j=M+2-k}^M C_j.
\end{equation}
Letting $C_2=C_3=\cdots C_M= C$ (which is the case when $\rho_{ij}=\rho_x, \forall i,j$ at high SNR), the product in~(\ref{e:optalph_high_SNR}) becomes $C^{-1}C^{k-1}$, and~(\ref{e:alsum_pre}) turns into a Geometric series.
From the sum of a Geometric series
\begin{equation}\label{e:geom_series}
\sum_{k=1}^n q^{k-1}=\frac{1-q^n}{1-q},
\end{equation}
the following high SNR approximation can be derived
\begin{equation}\label{e:alpha_sum_highSNR}
\sum_{i=2}^M \alpha_i^2 \approx \frac{\alpha_m^2}{C}\bigg(\frac{(1-\rho_x) C \mathcal{K}\varrho_M}{3}\bigg)^{M-2}.
\end{equation}
Inserting~(\ref{e:alpha_sum_highSNR}) and~(\ref{e:optalph_high_SNR}) with $i=2$ into the expression for $D_1$ in~(\ref{e:dist_high_snr})
\begin{equation}\label{e:d1_solve}
\begin{split}
&\bigg(M\text{SNR}-\frac{\alpha_m^2}{C}\bigg(\frac{(1-\rho_x) C \mathcal{K}\varrho_M}{3}\bigg)^{M-2}\bigg)\cdots\\&\bigg(\frac{(1-\rho_x) C \mathcal{K}\varrho_M}{3}\bigg)^{1-M}=\mathcal{K}\varrho_M,
\end{split}
\end{equation}
where SNR$=P/\sigma_n^2$. By solving~(\ref{e:d1_solve}) and removing constant terms, then
\begin{equation}\label{e:SDR_loss_highSNR}
\mathcal{K}=\sqrt[M]{\bigg(\frac{3}{C}\bigg)^{M-1}}=\bigg(\frac{3}{C}\bigg)^{1-\frac{1}{M}}, \hspace{0.25cm}\rho_x\neq1.
\end{equation}
~(\ref{e:SDR_loss_highSNR}) quantifies the loss to the performance upper bound. By inserting $C=b^2 \vartheta$ and $M=2$ in~(\ref{e:SDR_loss_highSNR}), the loss calculated in~\cite{Floor_Kim_Wernersson11_TCOM} results, and so~(\ref{e:SDR_loss_highSNR}) is a generalization. One can observe that for any $M$ and $\rho_x$, DQLC exhibits a constant gap to the bound as SNR$\rightarrow\infty$.  The gap grows somewhat with $M$, however, but is fortunately bounded: Taking the limit $M\rightarrow\infty$, the loss is $\approx 7.2$ dB when $\rho_x$ is close to 0 and $\approx 10.2$ dB when $\rho_x$ is close to 1. From~(\ref{e:SDR_loss_highSNR}) one can see that there are mainly two loss factors. One is due to short code length: When the code length is infinite, $b\rightarrow 1$ (see e.g.~\cite{Floor_kim_ramstad_ITW2012}), whereas when the code length is one, $b\approx 4$. The nested structure of DQLC results in an increased loss with $M$ whenever $b>1$. This accumulation is avoided with the S-K mappings described in Section~\ref{ssec:colaborative_sk}, as seen from Fig.~\ref{fig:DQLC_High_SNR}. The distance to the upper bound is around 0.8-0.95 dB when $\rho_x = 0$, actually decreasing slightly when $M$ increases (the same effect is observed when $\rho_x$ is close to 1, but the loss is now around 2 dB). This clearly indicates that it is not the zero delay requirement alone that makes the loss increase with $M$. The reason for the increasing loss is most likely that the DQLC is sub-optimal. Alternatively, it may be that zero delay distributed JSCC schemes will suffer from an increasing loss like~(\ref{e:SDR_loss_highSNR}). This must be dismissed or confirmed through further research, however.
The reason why $\vartheta=2$ when $\rho_x$ is close to one is that the source space can not be rotated (decorrelation) with the choice of distributed encoders, and this implies
that each channel segment gets somewhat longer than necessary.

By inserting $b\approx 4$ and $\vartheta\approx 1$ when $\rho_x$ is close to zero, and $\vartheta \approx 2$ when $\rho_x$ is close to 1, the performance of DQLC as a function of $M$ can be plotted. Fig.~\ref{fig:Loss_at_high SNR} shows the loss from upper bound for DQLC (high SNR in general) and uncoded transmission at 100 dB SNR. S-K mappings for $\rho_x=0$ case is also shown (results are taken from~\cite[chapter 3]{fulds97a} and~\cite{FloorThesis}).
\begin{figure}[h]
    \begin{center}
        \subfigure[]{
            \includegraphics[width=1\columnwidth]{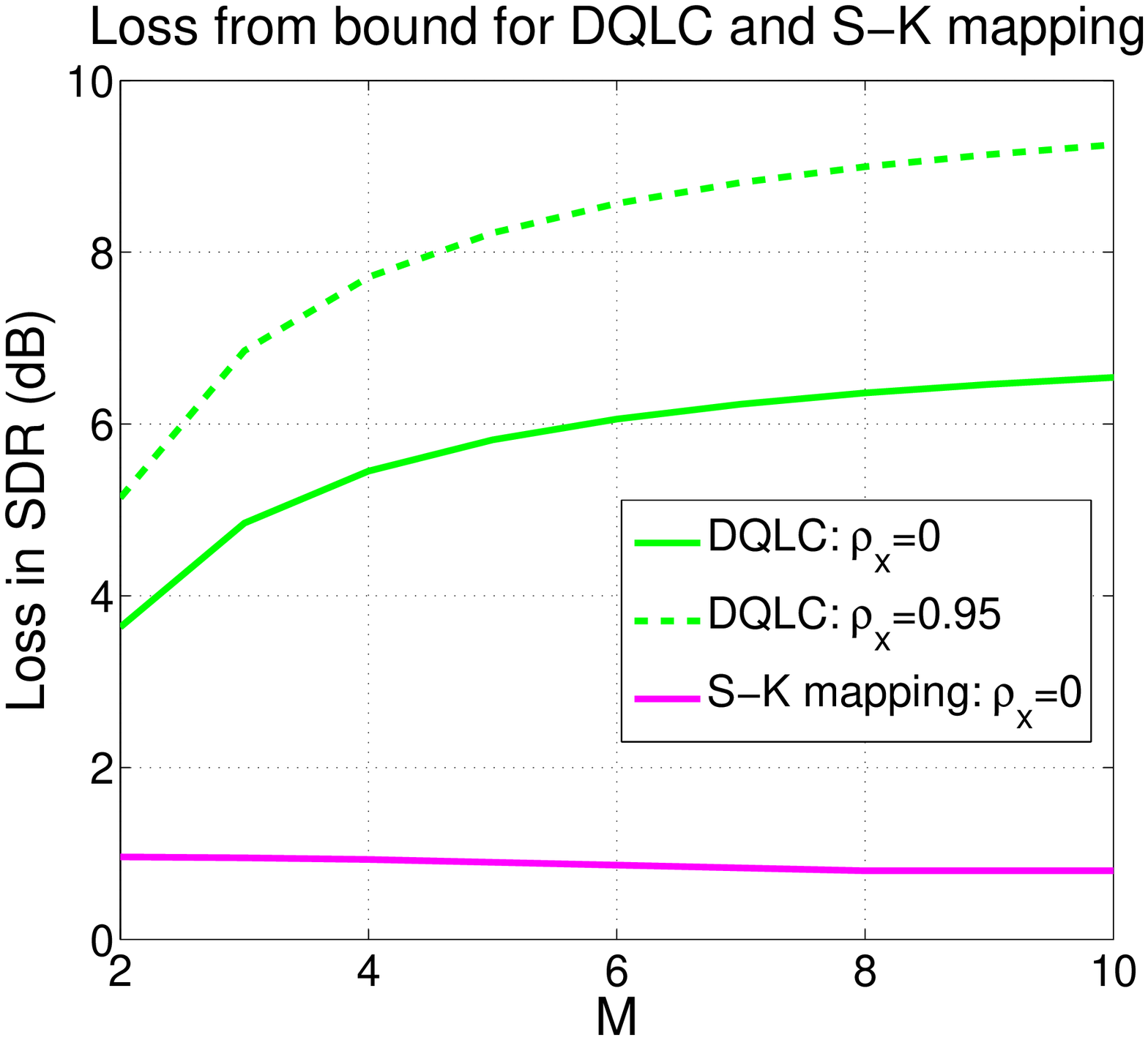}
        \label{fig:DQLC_High_SNR}}
        \hfil
        \subfigure[]{
            \includegraphics[width=1\columnwidth]{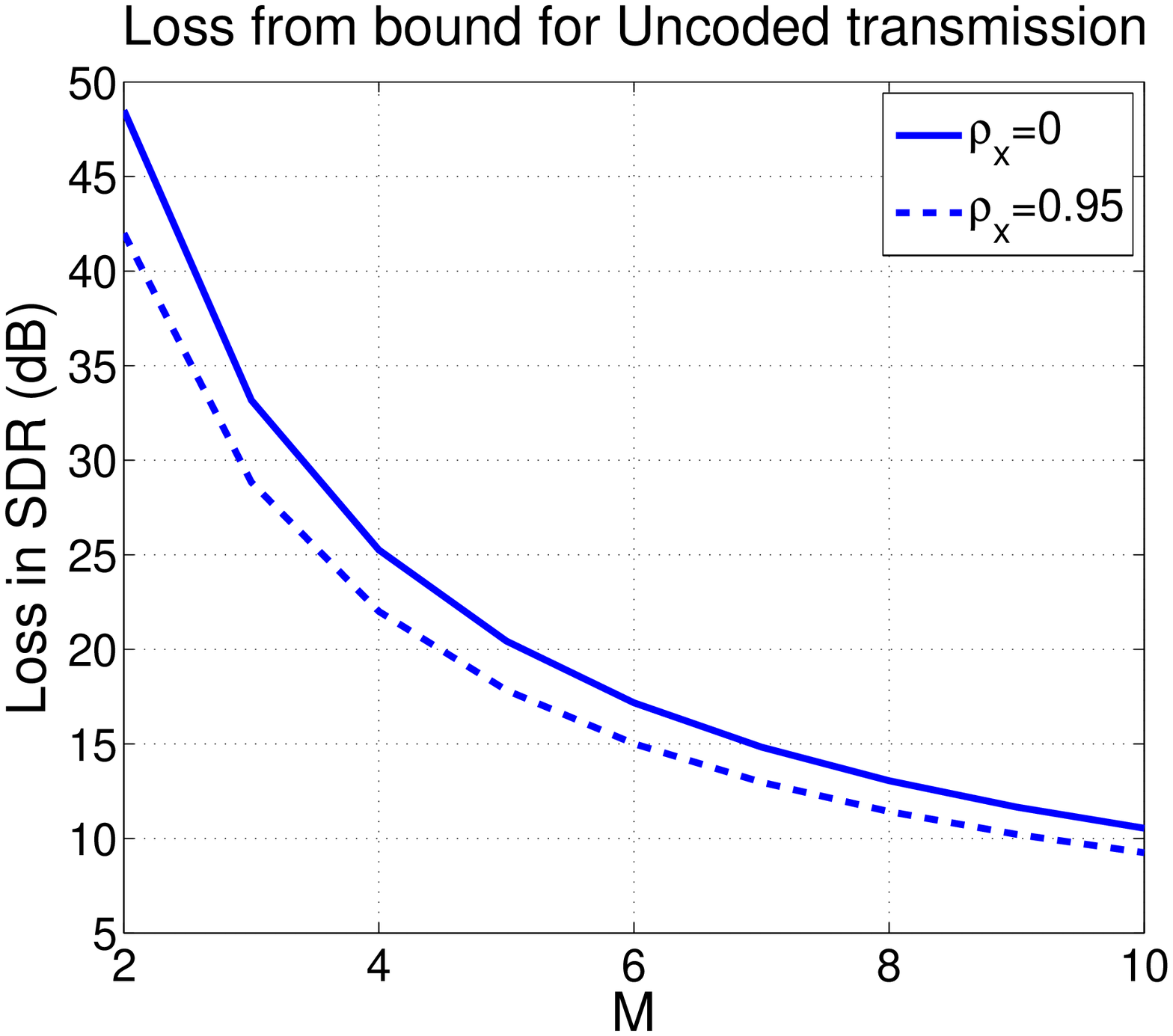}
        \label{fig:Uncoded_High_SNR}}
    \end{center}
    \caption{Loss in SDR from upper bound at SNR$=100$ dB. (a) DQLC. (b) Uncoded transmission.}
    \label{fig:Loss_at_high SNR}
\end{figure}
From these plots one can see that the performance of uncoded transmission will close in on the performance of DQLC when around 10 sources are considered.
This number will decrease somewhat for smaller SNR  and increase as the SNR gets higher. Note that for uncoded transmission the distance to the bound will in any case increase as the SNR grows (except when $\rho_x=1$), contrary to DQLC, and so DQLC will improve over uncoded transmission when the SNR gets sufficiently large, thus fulfilling our objective.

\section{Optimization and simulation of DQLC at arbitrary SNR when $M=3$}\label{sec:Ex_SNQLC_M3}
We give an example on how to calculate and optimize distortion for all SNR when $M=3$ by applying the analysis in
Section~\ref{ssec:dist_calcM}. The optimized DQLC is further simulated and compared to distortion lower bound, S-K mapping and uncoded transmission.

\subsection{Calculation of distortion}\label{ssec:calc_m3}
The distortion for source $3$ is found from Section~\ref{ssec:dist_calcM} by setting $M=3$ in~(\ref{e:cutoff_noise_M}),~(\ref{e:anM_noise_gen}) and~(\ref{e:chM_noise_gen}),
and  the channel power is given by~(\ref{e:power_M}). Furthermore, the distortion from quantization and limitation of source $1$ and $2$ is given by~(\ref{e:q_noise_gen}) and the power is given by~(\ref{e:power_i}). What remains to calculate is the distortion due to channel noise for source $1$ and $2$ given by~(\ref{e:nq_ndist_gen}). When analyzing DQLC around its optimal distortion, only jumps to the nearest neighboring centroids needs to be considered. This will simplify the calculations and speed up the optimization process. As for source 3, ~(\ref{e:nq_ndist_gen}) can be divided into channel- and anomalous distortion. As mentioned in Section~~\ref{ssec:dist_1to_m1}, both channel distortion and anomalous distortion  may occur for source 2, while only channel distortion may occur for source 1.

\emph{Anomalous distortion for source 2}:
A similar analysis to that in Section~\ref{ssec:dist_M} applies here. As for anomalous distortion for source 3, there are two cases to consider: $l_2> 2\kappa_2$ and $l_2\leq 2\kappa_2$. Now $l_2= b_2 \sigma_x \sqrt{2(1-\rho_x)}$ and calculated in the same way as $l_M$ in Section~\ref{ssec:pdf_calc_gen}, now using the parallelogram to the right in Fig.~\ref{fig:DQLC_corr_concept_x2}.
$b_2$ is a constant that determines the width of the ellipse (in Fig.~\ref{fig:DQLC_corr_concept_x2}) containing the significant (quantized) probability mass.

Assume first that  $l_2> 2\kappa_2$ (see $\rho_x=0$ case in Fig.~\ref{fig:SNQLC_corr_concept}). The error we get when anomalies start to
happen is the difference in magnitude between centroid no. 1 and centroid no. $N_{q_2}$, i.e. $|q_{({N_{q_2}})_2}-q_{1_2}|=\Delta_2(N_{q_2}-1)$.
The probability for anomalies (the probability for crossing of the decision borders for encoder 1 in Fig.~\ref{fig:sqlc_scheme_channel}) is given  by 
\begin{equation}
p_{th2}=Pr\{|f_2(x_2)+f_3(x_3)+n| \geq d_1/2\}=P_{N_{q_2}} p_{d_{th}},
\end{equation}
where $P_{N_{q_2}}=Pr\{(N_{q_2}-1)\Delta_2/2 < x_2 < \infty \}$ is the probability for being at an \emph{edge}
centroid (see Fig.~\ref{fig:sqlc_scheme_channel}) and $p_{d_{th}}=Pr\{|y_3+n|>d_{th}\}$, where
\begin{equation}\label{e:d_th}
d_{th}=\frac{d_1}{2}- \frac{\alpha_2 d_2}{2} (N_{q_2}-1),
\end{equation}
is the distance from an edge centroid of encoder 2 to the decision border of encoder 1 (see Fig.~\ref{fig:sqlc_scheme_channel}). The distribution needed to calculate $p_{d_{th}}$ is given in~(\ref{e:pdf_zr0}).

Now consider the case $l_2\leq 2\kappa_2$ ($\rho_x=0.95$
scenario in Fig.~\ref{fig:SNQLC_corr_concept}). This case is difficult to calculate as accurately as the $l_2> 2\kappa_2$ case above, since its hard to determine exactly which centroids that are edge centroids (this can be seen by comparing the $\rho_x=0.95$ case with the $\rho_x=0$ case in Fig.~\ref{fig:SNQLC_corr_concept}). One can get around this problem by calculating both the probability $\tilde{p}_{th2}=Pr\{|f_2(x_2)+f_3(x_3)+n| \geq d_1/2\}$
and the resulting anomalous error assuming that $y_2=f_2(x_2)$ is continuous. By using $p_{z_M(\overrightarrow{i}_m)}(z_M)_{\gamma_M}$ in~(\ref{e:pdf_z3_rho}) and the expression for $p(y_2|x_1)$ (see~\cite[p.223]{papoulis02}) one can show that
\begin{equation}\label{e:tpth2}
\begin{split}
&\tilde{p}_{th2}= \int_{d_1/2}^\infty \big[p_{z_M(\overrightarrow{i}_m)}(z_M)_{\gamma_M}\ast p(y_2|x_1) \big](z)\mbox{d}z\\
&=\int_{d_1/2}^\infty \frac{e^{-\frac{1}{2}\frac{(\mu-z_M)^2}{\sigma_n^2+\Sigma_{aa\cdot b}+\sigma_x^2\alpha_2^2 (1-\rho_x^2)}}}{\sqrt{2\pi\big(\sigma_n^2+\Sigma_{aa\cdot b}+\sigma_x\alpha_2 \sqrt{2\pi(1-\rho_x^2)}\big)}}\mbox{d}z,
\end{split}
\end{equation}
where $d_1=\Delta_1\big(1+\rho_x(\alpha_2+\alpha_3)\big)$.
As in Section~\ref{ssec:dist_M}, the probability in~(\ref{e:tpth2}) is calculated assuming $\mu=0$.
%
With the same reasoning as in Section~\ref{ssec:dist_M}, one can show that the error we get when anomalies start to happen is approximately
$\gamma_2=l_2-\Delta_1= b_2\sigma_x \sqrt{2(1-\rho_x)}-\Delta_1$ (see Fig.~\ref{fig:DQLC_corr_concept_x2}).

The anomalous distortion becomes
\begin{equation}\label{e:anomal_x2}
\bar{\varepsilon}_{an2}^2=
\begin{cases}
 p_{th2} \big(\Delta_2(N_{q_2}-1)\big)^2, & l_2 > 2\kappa_2, \vspace{0.3cm}\\
\tilde{p}_{th2}\gamma_2^2, & l_2 \leq 2\kappa_2,
\end{cases}
\end{equation}
where $2\kappa_2= \Delta_2(N_{q_2}-1)$.


\emph{Channel distortion, source 2}: Since anomalous errors results for source 3 whenever channel errors occur for source 2, the validity for these events are the same.
 A similar expression to that in~(\ref{e:anomal_x2}) can therefore be derived:

Consider first $l_3 > 2\kappa_3$: For a given source segment in the $x_1,x_2$ plane an \emph{inner centroid} has two neighbors, while an edge centroid  has only one (see Fig.~\ref{fig:DQLC_corr_concept_x2}).
The probability that an inner centroid is confused with its neighbors is given by $p_{th3}$, found by substituting $M=3$
in~(\ref{e:th3}). For edge centroids we have $p_{th3}/2$ since channel errors happens if an edge centroid is exchanged with an inner centroid, whereas anomalous errors
happen otherwise (see Fig.~\ref{fig:sqlc_scheme_channel}). When neighboring centroids are exchanged, the error is $\Delta_2^2$. With $P_{N_{q_2}}$,
the probability for an edge centroid, then 
\begin{equation}\label{e:distcalc_ch2}
\begin{split}
\bar{\varepsilon}_{C_2}^2&=2\Delta_2^2 p_{th_3} \sum_{i=2}^\frac{N_{q_2}}{2} P_i +\Delta_2^2 2 P_{N_{q_2}}\frac{p_{th_3}}{2}\\ &= 2\Delta_2^2 p_{th_3} \big(1/2-P_{N_{q_2}}\big) +\Delta_2^2 P_{N_{q_2}} p_{th_3}\\
 &= \Delta_2^2 p_{th_3} \big(1-P_{N_{q_2}}\big).
\end{split}
\end{equation}

Now assume $l_3 \leq 2\kappa_3$: As for anomalous distortion, this case is difficult to calculate accurately, due to the difficulty of determining which centroids are edge centroids.
One can, however, upper bound channel distortion by assuming that jumps from any centroid leads to channel distortion (i.e. none of the possible events leads to anomalous distortion).
The  probability for channel distortion is then given by $\tilde{p}_{th3}$, found by substituting $M=3$
in~(\ref{e:th3r}).  We therefore have $\bar{\varepsilon}_{C_2}^2\leq\Delta_2^2 \tilde{p}_{th_3}$. This bound is accurate enough to find the optimal parameters of DQLC.

\emph{Channel distortion, source 1}: Since $N_{q_1}=\infty$ and no threshold effects happen for source 1, the probability for channel distortion for source 1 will be the same as
the probability for threshold effects for source 2. The magnitude of the error is $\Delta_1^2$. Therefore
$\bar{\varepsilon}_{C_1}^2= \Delta_1^2 p_{th_2}$ when $l_2 > 2\kappa_2$, and
$\bar{\varepsilon}_{C_1}^2= \Delta_1^2 \tilde{p}_{th_2}$ when $l_2 \leq 2\kappa_2$.

\subsection{Optimization and Simulation}\label{ssec:sim_m3}
Instead of solving the constrained problem in~(\ref{e:opt_problem_sqlc}), we choose to scale $x_1$ by $\xi = \sqrt{3P-(P_2(\Delta_2,\alpha_2)+P_3(\alpha_3,\kappa_3))}/\sigma_x$ prior to quantization to get an unconstrained problem. Note that the factor $q_{i_m}(1+\rho_x(\alpha_2+\alpha_3))$ in~(\ref{e:decode_two}) must then be changed to $q_{i_m}(1+\rho_x(\alpha_2+\alpha_3)/\xi)$ in order to decode correctly (the integration limit in~(\ref{e:tpth2}) must also be changed).
The average distortion for DQLC is $D=(D_1+D_2+D_3)/3$, where
\begin{equation}
\begin{split}
D_1&=\frac{\bar{\varepsilon}_{q,1}^2(\Delta_1)+\bar{\varepsilon}_{C_1}^2(\Delta_1,\alpha_2,N_{q_2})}{\xi(\Delta_2,\alpha_2,\alpha_3,\kappa_3)},\\
D_2&=\bar{\varepsilon}_{q,2}^2(\Delta_2,N_{q_2})+\bar{\varepsilon}_{C_2}^2(\Delta_2,N_{q_2},\alpha_2,\alpha_3,\kappa_3)\\
&+\bar{\varepsilon}_{an2}^2(\Delta_2,N_{q_2},\alpha_2,\alpha_3,\kappa_3), \\
D_3&=\bar{\varepsilon}_{\kappa_3}^2(\kappa_3)+\bar{\varepsilon}_{C_3}^2(\alpha_3,\beta)+\bar{\varepsilon}_{an3}^2(\Delta_2,\alpha_2,\alpha_3,\kappa_3).
\end{split}
\end{equation}
All parameters must be greater than zero and $\alpha_2\geq\alpha_3$.
Some of the distortion terms do not have analytical solutions, and thus numerical optimization is necessary to determine the optimal parameters.

DQLC is compared to the bound from Section~\ref{ssec:Dist_bound}, uncoded
transmission from Section~\ref{sec:linear} and the S-K
mappings from Section~\ref{ssec:colaborative_sk}. Again we choose to look at signal-to-distortion ratio (SDR)
$\sigma_x^2/D$ as a function of channel SNR $=P/\sigma_n^2$ (and $\rho_x$) instead of distortion.
The results are shown in Fig.~\ref{fig:Results_M3_Rc} for $\rho_x=0$ and
$\rho_x=0.95$.
\begin{figure}[h!]
\centering
  \includegraphics[width=1\columnwidth]{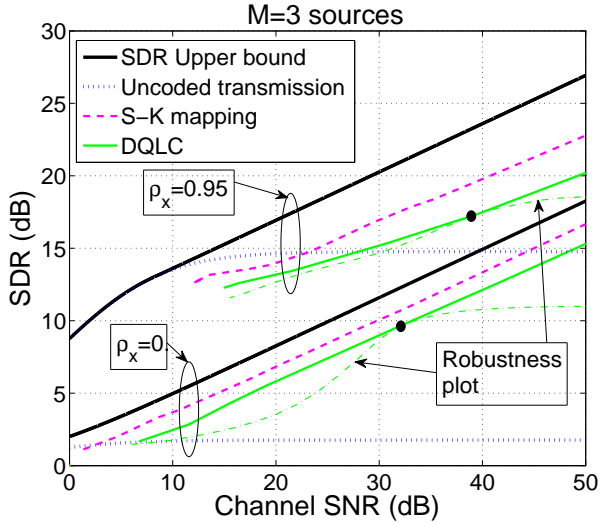}
  \caption{Performance of DQLC compared to the SDR upper bound, S-K mappings and uncoded transmission. The black dots shows the design SNR for the robustness plots for DQLC. }\label{fig:Results_M3_Rc}
\end{figure}
The SDR upper bound is given by $\sigma_x^2/D^*$, where $D^*$ is the optimal distortion given in~(\ref{e:Dist_bound_equalP}).

When $\rho_x=0$, DQLC drops around 2.5-3.5dB from the upper bound, while it drops around 4 to 7 dB when $\rho_x=0.95$. The loss at high SNR (50dB) corresponds well with the calculated estimate shown in Fig.~\ref{fig:DQLC_High_SNR} when $\rho_x=0.95$, while the calculated estimate is a bit pessimistic when $\rho_x = 0$. Fortunately, DQLC gets somewhat closer to the bound as the SNR drops. There is also a backoff to the
S-K mapping of around 1 to 2 dB. This is because S-K mappings avoid threshold effects as well as the ``loss accumulation''  mentioned in Section~\ref{ssec:high_snr_dqlc}. Interestingly, DQLC improves with increasing $\rho_x$ without changing the basic encoder and decoder
structure, only the different parameters needs to be adapted.
The improvement is significant, around 5 to 7 dB when $\rho_x$ goes from
0 to 0.95. DQLC is also robust against variations in noise level. Note that DQLC outperform uncoded transmission for SNR$>7$dB when $\rho_x=0$ and SNR$>28$dB when $\rho_x=0.95$.

The gain from increasing correlation as a
function of $\rho_x$ is shown in Fig.~\ref{fig:gain_rho_dqlc} for DQLC and uncoded transmission at 40 dB channel SNR.
\begin{figure}[h]
    \begin{center}
        \subfigure[]{
            \includegraphics[width=0.45\columnwidth]{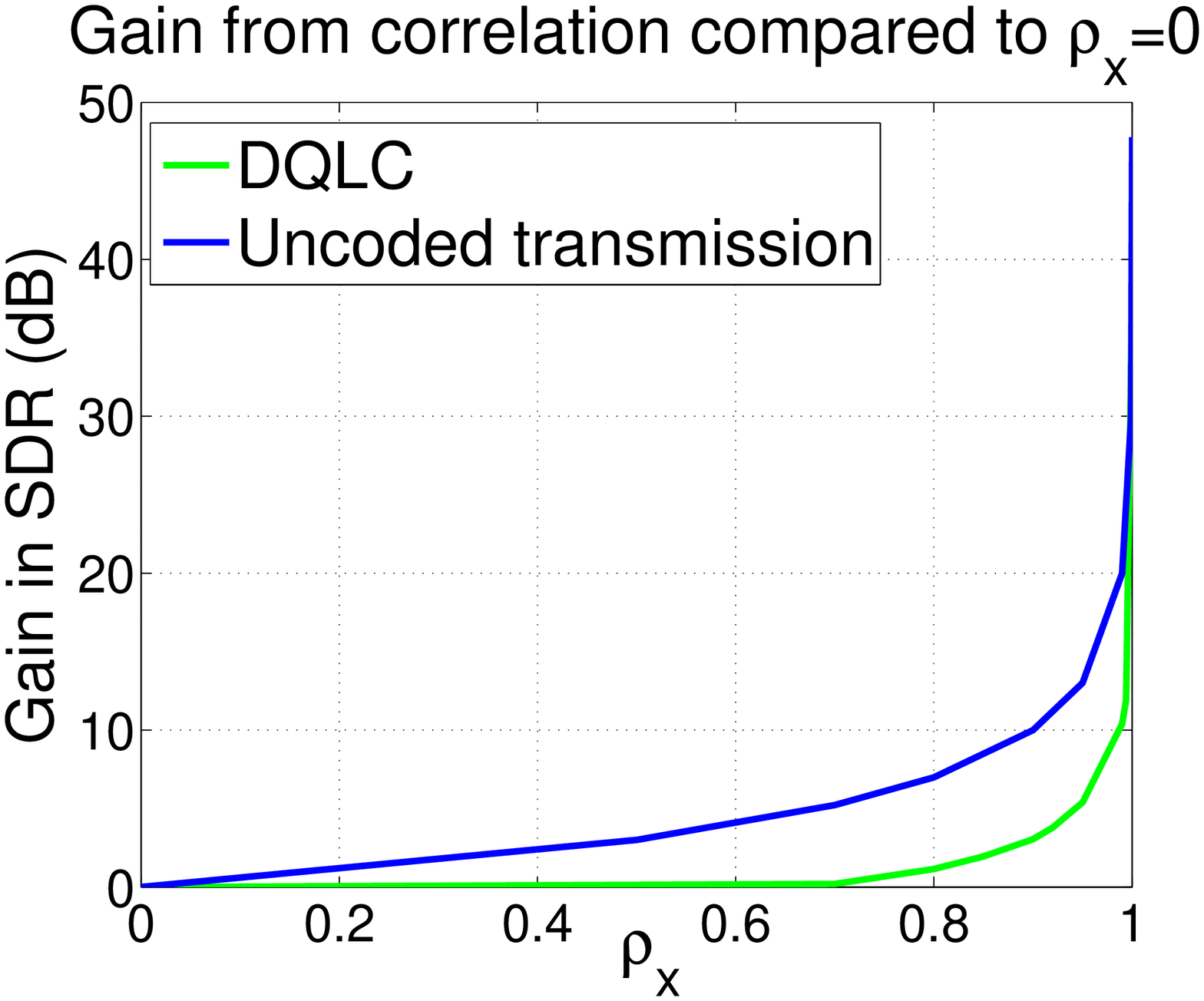}
        \label{fig:gain_rho_dqlc}}
        \hfil
        \subfigure[]{
            \includegraphics[width=0.45\columnwidth]{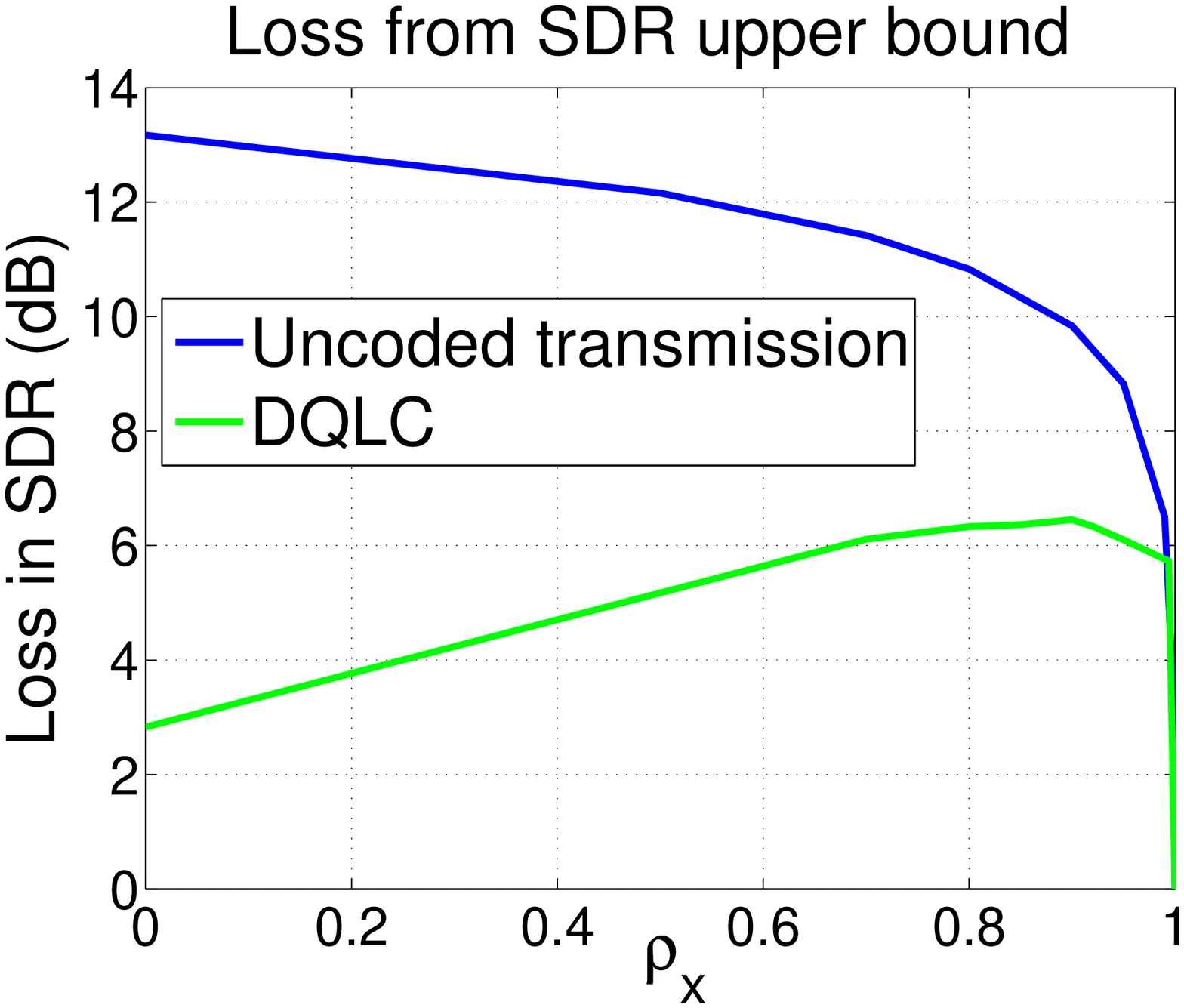}
        \label{fig:loss_opta_rho}}
    \end{center}
    \caption{How correlation affects performance when $M=3$ at 40 dB channel SNR. (a)
    Gain from correlation for DQLC and uncoded transmission. (b)
    Loss from SDR upper bound as a function of $\rho_x$.}\label{fig:Correlation_influence}
\end{figure}
Note that the gain for DQLC is not significant before $\rho_x > \approx
0.7$, whereas the gain gets large when $\rho_x\rightarrow 1$ (around 37.5 dB). Uncoded transmission shows an even greater gain, which is natural since it goes from being highly sub-optimal when $\rho_x=0$  to achieve the bound for all SNR when $\rho_x=1$.
The gap to the performance upper bound as a function of $\rho_x$ is
plotted for DQLC and uncoded transmission in Fig.~\ref{fig:loss_opta_rho}, for 40dB channel SNR. Note that the distance to the upper bound is largest for DQLC when $\rho_x$ is around $0.9$, and that DQLC and uncoded transmission both reach the upper bound in the limit
$\rho_x\rightarrow 1$. Note that DQLC performs better than uncoded transmission for most $\rho_x$ values at 40dB SNR.

If there is a demand for equal transmit power from each encoder, one can still use DQLC with timesharing. I.e. each encoder described here is used on each source 1/3 of the time (1/M in general). As shown for the $M=2$ case in~\cite{Floor_Kim_Wernersson11_TCOM}, a further backoff from the bound compared to that in Fig.~\ref{fig:Results_M3_Rc} must then be expected.

\section{Summary and extensions}\label{sec:summary}
In this paper, a distributed delay-free and low complexity joint
source-to-channel mapping was proposed and bounds were derived for transmission of a multivariate Gaussian over a Gaussian MAC. Both linear and nonlinear
mappings were analyzed. A linear mapping (uncoded transmission) achieves the
performance upper bound within a certain range of \emph{low} channel SNR. A
nonlinear mapping, DQLC, was introduced to improve on uncoded transmission at higher SNR. A collaborative scheme (Shannon-Kotel'nikov mapping) was also introduced to provide an approximate bound for zero delay schemes.

DQLC does not achieve the performance upper
bound, but leaves a certain gap which value depends on both correlation and the number of sources.
However, DQLC constitutes a constant gap to the bound in any case as SNR$\rightarrow \infty$ and its received fidelity (SDR) improves with increasing correlation without changing
the basic encoder and decoder structure. DQLC therefore outperforms uncoded transmission as the SNR gets high enough in any case. Unfortunately, the loss to the bound increases somewhat with the number of sources (something the collaborative S-K mappings does not), but is fortunately bounded to a
finite value as the number of sources goes to infinity. Optimization and simulation for 3 sources also showed that the gap to the bound decreases somewhat as the SNR drops.

It is also important to note that DQLC can be applied for any unimodal source (and channel) distribution and optimized using the same method as presented in this paper.

Future research should aim at finding, if possible, a scheme where the loss to the bound does not increase with the number of sources, as is the case for the collaborative S-K mappings. The generalization of DQLC to arbitrary code length should also be investigated. Recently it has been shown that such a generalization achieves the performance upper bound when SNR is high for any number of uncorrelated sources~\cite{Floor_kim_ramstad_ITW2012}. What remains is to prove what happens for arbitrary correlation and SNR. The impact of practical issues like imperfect timing and synchronization should also be addressed in order to get a step closer to a possible practical realization.

\section*{Acknowledgment}
This work was funded by the Norwegian Research Council under
project MELODY (187857/S10), the Swedish Research Council and
VINNOVA.

\appendix
Let $\mathbf{m}_y$ denote the mean and $\Sigma$ the
covariance matrix of an n-dimensional Gaussian random vector $\mathbf{y}$
\begin{theorem}\label{th:multvar_cond_moments}~\cite[p. 12]{muirhead82}
Let $\mathbf{y}\sim\mathcal{N}_n(\mathbf{m}_y,\Sigma)$ and make a
partition of $\mathbf{y}$ into a $k\times 1$ vector $\mathbf{y}_a$
and a $(n-k)\times 1$ vector $\mathbf{y}_b$. Then make the following
partition of  $\mathbf{m}_y$ and $\Sigma$:
\begin{equation}\label{e:subdiv_matrix}
\mathbf{m}_y=\begin{bmatrix} \mathbf{m}_a\\ \mathbf{m}_b
\end{bmatrix},\Sigma=\begin{bmatrix}
\Sigma_{aa}&\Sigma_{ab}\\ \Sigma_{ba}&\Sigma_{bb}
\end{bmatrix}
\end{equation}
Further, let $\Sigma_{bb}^{+}$ 
be a matrix satisfying
 $\Sigma_{bb}\Sigma_{bb}^{+} \Sigma_{bb}=\Sigma_{bb}$
and let $\Sigma_{aa \cdot b}=\Sigma_{aa}-\Sigma_{ab}\Sigma_{bb}^+ \Sigma_{ba}$.
Then,
\begin{equation}\label{e:gen_cond_pdf}
p(\mathbf{y}_a|\mathbf{y}_b) \sim \mathcal{N}_k (\mathbf{m}_a+\Sigma_{ab}\Sigma_{bb}^+ (\mathbf{y}_b-\mathbf{m}_b),\Sigma_{aa\cdot b})
\end{equation}
\end{theorem}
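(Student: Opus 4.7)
My plan is to prove Theorem~\ref{th:multvar_cond_moments} by the standard device of constructing an auxiliary Gaussian vector that is independent of $\mathbf{y}_b$, and then reading off the conditional law from there. By translating, I may assume $\mathbf{m}_y = \mathbf{0}$, which keeps the algebra light; the general case follows by adding back $\mathbf{m}_a$ and $\mathbf{m}_b$ at the end.

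The key construction is to define
\begin{equation}
\mathbf{u} \;=\; \mathbf{y}_a - \Sigma_{ab}\Sigma_{bb}^{+}\,\mathbf{y}_b .
\end{equation}
Because $(\mathbf{y}_a,\mathbf{y}_b)$ is jointly Gaussian and $\mathbf{u}$ is a linear combination of its components, the pair $(\mathbf{u},\mathbf{y}_b)$ is also jointly Gaussian. I would then compute
\begin{equation}
\EE[\mathbf{u}\mathbf{y}_b^{T}] \;=\; \Sigma_{ab} - \Sigma_{ab}\Sigma_{bb}^{+}\Sigma_{bb}
\end{equation}
and show that this vanishes. This is the step that needs care when $\Sigma_{bb}$ is singular, and it is where I expect the main obstacle: one cannot simply invoke $\Sigma_{bb}^{+}\Sigma_{bb}=I$. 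Instead, I would use positive semidefiniteness of the full $\Sigma$ to argue that the row-space of $\Sigma_{ab}$ is contained in the row-space of $\Sigma_{bb}$ (equivalently, $\Sigma_{ba}\in\mathrm{range}(\Sigma_{bb})$); combined with the defining property $\Sigma_{bb}\Sigma_{bb}^{+}\Sigma_{bb}=\Sigma_{bb}$, this yields $\Sigma_{ab}\Sigma_{bb}^{+}\Sigma_{bb}=\Sigma_{ab}$ and hence $\EE[\mathbf{u}\mathbf{y}_b^{T}]=0$. For jointly Gaussian vectors, uncorrelatedness implies independence, so $\mathbf{u}\perp\mathbf{y}_b$.

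With independence in hand, the conditional distribution of $\mathbf{u}$ given $\mathbf{y}_b$ equals its marginal, which is zero-mean Gaussian with covariance
\begin{equation}
\EE[\mathbf{u}\mathbf{u}^{T}] \;=\; \Sigma_{aa} - \Sigma_{ab}\Sigma_{bb}^{+}\Sigma_{ba} - \Sigma_{ab}\Sigma_{bb}^{+}\Sigma_{ba} + \Sigma_{ab}\Sigma_{bb}^{+}\Sigma_{bb}\Sigma_{bb}^{+}\Sigma_{ba},
\end{equation}
which collapses to $\Sigma_{aa\cdot b}$ upon reapplying the range/pseudoinverse identity above. Finally, rewriting
\begin{equation}
\mathbf{y}_a \;=\; \mathbf{u} + \Sigma_{ab}\Sigma_{bb}^{+}\,\mathbf{y}_b
\end{equation}
and conditioning on $\mathbf{y}_b$ fixes the second summand deterministically, producing a Gaussian conditional law with mean $\Sigma_{ab}\Sigma_{bb}^{+}\mathbf{y}_b$ and covariance $\Sigma_{aa\cdot b}$. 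Undoing the centering gives the claimed form~(\ref{e:gen_cond_pdf}). The only delicate ingredient, as flagged, is justifying the pseudoinverse identities from the PSD structure of $\Sigma$; everything else is direct linear algebra on Gaussian characteristic functions (or, equivalently, on moments, which suffice since the vectors involved are Gaussian).
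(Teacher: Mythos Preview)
Your argument is correct and is precisely the standard textbook approach. The paper does not actually give its own proof of this theorem; it simply cites Muirhead~\cite[pp.~12--13]{muirhead82}, and the proof there proceeds exactly as you outline: form the linear combination $\mathbf{y}_a-\Sigma_{ab}\Sigma_{bb}^{+}\mathbf{y}_b$, verify it is uncorrelated with (hence independent of) $\mathbf{y}_b$, and read off the conditional mean and covariance.

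One small point worth tightening: in your covariance expansion of $\EE[\mathbf{u}\mathbf{u}^T]$, the cross term coming from $\mathbf{y}_a\mathbf{y}_b^T A^T$ is $\Sigma_{ab}(\Sigma_{bb}^{+})^{T}\Sigma_{ba}$ rather than $\Sigma_{ab}\Sigma_{bb}^{+}\Sigma_{ba}$, since a generic generalized inverse need not be symmetric. Your range argument still handles this: writing $\Sigma_{ab}=D^{T}\Sigma_{bb}$ (which follows from $\ker(\Sigma_{bb})\subseteq\ker(\Sigma_{ab})$, itself a consequence of $\Sigma\succeq 0$), all three correction terms reduce to $D^{T}\Sigma_{bb}D$ and the sum collapses to $\Sigma_{aa\cdot b}$ as claimed. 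So the approach is sound; just be careful with the transpose when you write it up.
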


\begin{proof}
See~\cite[pp. 12-13]{muirhead82}
\end{proof}


\bibliographystyle{IEEEtran}
\bibliography{references}

\end{document}